\documentclass{ifacconf}

\usepackage{graphicx,color} 
\usepackage{tikz}
\usepackage{natbib} 
\usepackage{algorithmicx}
\usepackage{algorithm}
\usepackage{algpseudocode}
\newtheorem{assumption}{Assumption}

\newtheorem{example}{Example}
\newtheorem{lemma}{Lemma}

\usepackage{float}
 
\usepackage{enumitem}
\setlist[itemize]{leftmargin=*}
\setlist[enumerate]{leftmargin=*}
\usepackage{comment}
\usepackage{amsmath}
\usepackage{amssymb}
\usepackage{mathtools}

\DeclareMathOperator{\diag}{diag}

\DeclareMathOperator{\Img}{img}
\DeclareMathOperator{\row}{row}

\def\S{\mathcal{S}}

\def\M{\mathcal{M}}
\def\N{\mathcal{N}}

\renewcommand{\O}{\mathcal{O}}
\newcommand{\RR}{\mathbb{R}}
\newcommand{\NN}{\mathbb{N}}
\usepackage{bm}

\usepackage{array}
\usepackage{amsmath}
\usepackage{alphalph}

\makeatletter
\newcounter{mysubequation}
\newenvironment{mysubequations}{%
  \refstepcounter{equation}%
  \protected@edef\theparentequation{\theequation}%
  \setcounter{mysubequation}{0}%
  \setcounter{equation}{0}%
  \def\theequation{\theparentequation\alphalph{\value{equation}}}%
  \ignorespaces
}{%
  \setcounter{equation}{\value{mysubequation}}%
  \ignorespacesafterend
}
\makeatother

\begin{document}
\begin{frontmatter}

\title{Interval Observer Design Using Observability Decomposition\\for Detectable Linear Systems}

\author[First]{G. Q. Bao Tran}
\author[Second]{Thach Ngoc Dinh}
\author[Third]{Zhenhua Wang}

\address[First]{Coordinated Science Laboratory, University of Illinois Urbana-Champaign, Urbana, IL 61801, USA\\(e-mail: baotran@illinois.edu).}
\address[Second]{Cedric-Lab, Conservatoire National des Arts et M\'etiers (CNAM), Paris, France (e-mail: ngoc-thach.dinh@lecnam.net).}
\address[Third]{School of Astronautics, Harbin Institute of Technology, Harbin, China (e-mail: zhenhua.wang@hit.edu.cn).}

\begin{abstract}
We provide a systematic interval observer design method for detectable linear time-invariant (LTI) systems, where a part of the state is observable from the measured output. An observability-based invertible LTI transformation decomposes the state into two parts. The first part is decoupled from the other and observable from the output, while the second is affected by the first, does not appear in the output, but is detectable. A Sylvester-based LTI interval observer is designed for the first part. For the second part, a Jordan-based linear time-varying interval observer is built, treating the interaction from the first part as inputs with known bounds. The intervals in the original coordinates are constructed either by inverting the decomposition online for the intervals in the transformed coordinates or by directly implementing the observer written in the original coordinates. Academic examples illustrate the interest of our approach.
\end{abstract}

\begin{keyword}
Interval observer, linear system, uncertain system, observability, detectability.
\end{keyword}
\end{frontmatter}

\section{INTRODUCTION}\label{sec_intro}
Interval observers, unlike asymptotic observers, provide upper and lower bounds on the actual state of a perturbed system at each time instant, possibly from noisy measurements. Since their introduction in the pioneering work~\citep{Gouze-00}, interval observers have evolved in various directions, driven by the critical role of state estimation in monitoring~\citep{DiIto:2017:SCIEJournal}, fault detection~\citep{Wangetal:2024:SCIS}, and control applications~\citep{mex2,dinh2026set}. Although this design approach has proven effective, it relies on specific assumptions. Specifically, interval observers can be constructed when both the initial conditions and the time-varying uncertainties are bounded by known vectors, and their feasibility relies on a direct or indirect notion of a Metzler (cooperative) system in the case of continuous time (CT), or a non-negative system in discrete time (DT).

The seminal works~\citep{Mazenccontinu} in CT and~\citep{mazenc2014interval} in DT have proposed time-varying changes of coordinates that leverage the Jordan form of the real-valued dynamics matrix of a linear time-invariant (LTI) system, transforming it into the proper forms necessary for interval observer design. However, these transformations inherently lead to \emph{time-varying} observers, even for LTI systems. In our recent work~\citep{baoThachAut}, we have proposed Sylvester-based LTI designs, which have a built-in transformation into the same dimensions obtained from solving offline a Sylvester equation. This method, inspired by Luenberger's observer~\citep{luenberger}, also extends to linear time-varying systems in the mentioned work. However, this transformation, even in the LTI case, is shown to be invertible under the \emph{observability} of the system (i.e., the output and its derivatives uniquely determine the state). 
Hence, it is inapplicable when the system is only \emph{detectable} (i.e., solutions producing the same output converge asymptotically to each other). 
In the literature,~\citep{Raissietal,gu} also employ Sylvester equations for interval observer design for LTI systems, while~\citep{Efimov-12,cacace} succeed in building these estimators under less straightforward assumptions.

This paper proposes an approach that effectively fuses the aforementioned methods for detectable LTI CT and DT systems, where a part of the state is moreover observable from the output. As mentioned above, the Jordan-based approach requires only detectability but results in a time-varying observer, while the Sylvester-based design yields an LTI observer but requires observability, and we herein seek to combine them strategically, leaving aside the question of performance (i.e., tightness of the interval bounds) for future studies. We first perform an observability decomposition that separates the observable part of the state from the one that is only detectable. The dynamics of the first part and the measurement are independent of the second part. With this decoupling, we can apply the two mentioned designs to the respective parts, treating the interaction of the first part on the second one as a disturbance with bounds given by the first interval observer. We recover the bounds in the original coordinates either by inverting the decomposition online for the intervals or by directly implementing the observer in the given coordinates. Our advantage compared to classical methods is that this combination of interval observers maximizes the use of observability to make as much of the design time-invariant as possible, while resorting to more intricate transformations only under detectability. As a result, only part of the observer remains time-varying, while the rest becomes constant and thus can be computed offline, potentially considerably reducing computational complexity in high-dimensional systems. Our approach is illustrated step-by-step through academic examples.

\emph{Notations:} Inequalities like $a \leq b$ for real vectors $a$, $b$ or $A \leq B$ for real matrices $A$, $B$ are component-wise. For a matrix $M \in \RR^{n \times m}$ with entries $m_{ij}$, define $M^\oplus$ as the matrix in $\RR^{n \times m}$ whose entries are $\max\left\{0, m_{ij}\right\}$ and let $M^\ominus = M^\oplus - M$. 
We present results in CT and DT together. Let $x_t$ denote the current state, combining $x(t)$ in CT and $x_k$ in DT. Similarly, $x^+_t$ denotes its time derivative$\slash$jump, combining $\dot{x}(t)$ in CT and $x_{k+1}$ in DT. 

\begin{lemma}\citep[Section II.A]{Efimov-12}
\label{lem_pm}
Consider vectors $a$, $\overline{a}$, $\underline{a}$ in $\mathbb{R}^{n}$ such that $\underline{a} \leq a \leq \overline{a}$. For any $A \in \RR^{m\times n}$, we have $A^{\oplus}\underline{a} -A^{\ominus}\overline{a} \leq Aa \leq A^{\oplus}\overline{a} -A^{\ominus}\underline{a}$.
\end{lemma}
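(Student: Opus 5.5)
The plan is to split $A$ into its nonnegative and nonpositive parts and bound each part separately using the ordering $\underline{a}\le a\le\overline{a}$. By the notation introduced above, $A^\oplus$ has entries $\max\{0,m_{ij}\}\ge 0$. Since $A^\ominus = A^\oplus - A$, its entries are $\max\{0,m_{ij}\}-m_{ij} = \max\{0,-m_{ij}\}\ge 0$. Hence $A = A^\oplus - A^\ominus$ with both $A^\oplus$ and $A^\ominus$ entrywise nonnegative, and
\[
Aa = A^\oplus a - A^\ominus a .
\]

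The key monotonicity fact is that a matrix $N\ge 0$ (componentwise) preserves componentwise order: if $u\le v$, then $N(v-u)\ge 0$. This holds because each component $\sum_j n_{ij}(v_j-u_j)$ is a sum of products of nonnegative numbers. Applying this with $N=A^\oplus$ to $\underline{a}\le a\le\overline{a}$ gives
\[
A^\oplus\underline{a}\le A^\oplus a\le A^\oplus\overline{a}.
\]
Applying it with $N=A^\ominus$ gives $A^\ominus\underline{a}\le A^\ominus a\le A^\ominus\overline{a}$. Negating the latter reverses the inequalities:
\[
-A^\ominus\overline{a}\le -A^\ominus a\le -A^\ominus\underline{a}.
\]

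Adding the two chains componentwise, which is legitimate since componentwise inequalities add, yields
\[
A^\oplus\underline{a}-A^\ominus\overline{a}\le Aa\le A^\oplus\overline{a}-A^\ominus\underline{a},
\]
which is the claim. There is no real obstacle in this argument. The only point requiring care is to check that $A^\ominus$ as defined, namely $A^\oplus - A$ rather than some independent construction, is indeed nonnegative, and this is settled by the entrywise identity $\max\{0,m\}-m=\max\{0,-m\}$.
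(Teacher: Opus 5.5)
Your proof is correct and complete. The paper gives no proof of its own; it simply cites Efimov et al., where the result rests on the same splitting $A = A^\oplus - A^\ominus$ into two entrywise nonnegative matrices, together with the fact that nonnegative matrices preserve componentwise order, and on adding the two resulting chains of inequalities.
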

   
\begin{lemma}\label{lem_Pc}\citep[Theorem $2$]{Mazenccontinu}
For any Hurwitz constant matrix $F \in \RR^{n\times n}$, there exist a time-varying invertible real matrix $t \mapsto P_t\in \RR^{n\times n}$, a scalar $\sigma>0$, and a Metzler Hurwitz constant matrix $\Lambda\in \RR^{n\times n}$ such that for all $t \geq 0$, $\|P_t\|+\|(P_t)^{-1}\| \leq \sigma$ and $\Lambda = (P_t^+ + P_t F)(P_t)^{-1}$. 
\end{lemma}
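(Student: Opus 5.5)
The plan is to prove the statement in continuous time, where $P_t^+=\dot P_t$ and the identity to show is $\dot P_t+P_tF=\Lambda P_t$. The idea is to use the real Jordan form of $F$ and remove the oscillatory part of each complex block with a bounded time-varying rotation. What remains is block upper triangular, with the negative real parts of the eigenvalues on the diagonal and nonnegative entries above it, and is therefore Metzler and Hurwitz.

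\emph{Step 1 (real Jordan form).} Take a constant invertible $S$ with $F=SJS^{-1}$, where $J=\diag(J_1,\dots,J_p)$ is the real Jordan form.
\begin{itemize}
\item For a real eigenvalue $\lambda<0$, the block is $J_i=\lambda I+N$, with $N$ the nilpotent shift having ones on the superdiagonal. It is already Metzler.
\item For a complex pair $a\pm ib$ with $a<0$, the block is $J_i=I_k\otimes(aI_2+bE)+N_k\otimes I_2$, where $E=\left(\begin{smallmatrix}0&1\\-1&0\end{smallmatrix}\right)$. It is not Metzler in general, because of the $\pm b$ entries.
\end{itemize}

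\emph{Step 2 (rotation).} For each complex block set $Q_{i,t}=I_k\otimes e^{-bEt}$, and set $Q_{i,t}=I$ for real blocks. Let $Q_t=\diag(Q_{i,t})$. Since $I_k\otimes E$ commutes with both $I_k\otimes(aI_2+bE)$ and $N_k\otimes I_2$, we get
\[
\dot Q_{i,t}+Q_{i,t}J_i=Q_{i,t}\bigl(J_i-I_k\otimes bE\bigr)=\Lambda_iQ_{i,t},
\qquad
\Lambda_i:=I_k\otimes aI_2+N_k\otimes I_2 .
\]
Then $\Lambda=\diag(\Lambda_i)$ satisfies the two required properties:
\begin{itemize}
\item It is Metzler, since its off-diagonal entries are $0$ or $1$.
\item It is Hurwitz, since it is upper triangular with diagonal entries equal to the real parts of the eigenvalues of $F$, all negative.
\end{itemize}

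\emph{Step 3 (assemble and bound).} Define $P_t=Q_tS^{-1}$. Then
\[
\dot P_t+P_tF=(\dot Q_t+Q_tJ)S^{-1}=\Lambda Q_tS^{-1}=\Lambda P_t,
\]
which is the required identity. Each $Q_t$ is orthogonal, so $\|P_t\|\le\|S^{-1}\|$ and $\|P_t^{-1}\|\le\|S\|$ for all $t$. The bound therefore holds with $\sigma=\|S\|+\|S^{-1}\|$.

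\emph{Discrete-time analogue and main obstacle.} The statement's combined notation also covers discrete time, and I expect that case to be the main obstacle. I read the relation there in the product form $P_{k+1}F=\Lambda P_k$, with $\Lambda$ nonnegative and Schur; this reading should be checked against the source. The natural construction rotates by $-k\theta$ on each complex block $\rho R(\theta)$, and uses the sign flip $(-1)^k$ for negative real eigenvalues. This cancels the oscillation on the diagonal blocks, but the nilpotent coupling picks up a factor $R(-\theta)$ and need not be nonnegative. To handle this, I would first rescale the Jordan chains so that the coupling is $\varepsilon N$. I would then bound the coupling with the $M^\oplus,M^\ominus$ splitting of Lemma~\ref{lem_pm}, in the spirit of \citep{mazenc2014interval}. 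For $\varepsilon$ small enough the resulting nonnegative matrix remains Schur, but this rescaling step is what needs careful verification.
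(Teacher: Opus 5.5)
Your continuous-time argument is correct and is essentially the construction behind the cited \citep[Theorem~2]{Mazenccontinu}. You pass to the real Jordan form and remove the oscillation of each complex block with the bounded orthogonal factor $I_k\otimes e^{-bEt}$, which commutes with the nilpotent coupling; this leaves a Metzler, upper-triangular $\Lambda$ whose diagonal holds the negative real parts of the eigenvalues. The discrete-time paragraph is not needed, since this statement is the CT result and the DT one is the separate Lemma~\ref{lem_Pd}; incidentally, the obstacle you anticipate there disappears if you stagger the phases along each chain, e.g.\ $\diag\bigl(R(-k\theta),R(-(k+1)\theta),\dots\bigr)$ (and signs $(-1)^{k+j}$ for negative real eigenvalues), which makes the coupling exactly $I$ with no rescaling or $\oplus/\ominus$ splitting.
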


\begin{lemma}\label{lem_Pd}\citep[Theorem $4$]{mazenc2014interval}
For any Schur constant matrix $F\in \RR^{n\times n}$, there exist a sequence of invertible real matrices $(P_t)_{t \in \NN}$ in $\RR^{n\times n}$, a scalar $\sigma>0$, and a non-negative Schur constant matrix $\Lambda \in \RR^{n\times n}$ such that for all $t \in \NN$, $\|P_t\|+\|(P_t)^{-1}\| \leq \sigma$ and $\Lambda = P_t^+F(P_t)^{-1}$. 
\end{lemma}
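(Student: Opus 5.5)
The plan is to reduce $F$ to its real Jordan form by a constant similarity. Then, on each Jordan block, I would remove the sign of a negative eigenvalue, or the rotation of a complex pair, by a time-varying \emph{orthogonal} factor. Orthogonality keeps $\|P_t\|$ and $\|P_t^{-1}\|$ uniformly bounded. Throughout, $P_t^+ = P_{t+1}$.

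Write $F = S J S^{-1}$ with $S$ real invertible and $J=\diag(J_1,\dots,J_r)$ in real Jordan form. I will look for $P_t = Q_t S^{-1}$ with $Q_t=\diag(Q_t^1,\dots,Q_t^r)$ orthogonal. Then
\[
P_{t+1} F P_t^{-1} = Q_{t+1} J Q_t^{-1},
\]
so it suffices to make $\Lambda_i := Q^i_{t+1} J_i (Q^i_t)^{-1}$ constant, non-negative and Schur for each block. The bound is immediate: $\|P_t\|+\|P_t^{-1}\|\le \|S^{-1}\|+\|S\| =: \sigma$.

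For a real eigenvalue block $J_i=\lambda I + N$ (with $N$ the nilpotent shift), take $Q_t^i = s^t I$, where $s=\operatorname{sign}(\lambda)$ and $s=1$ if $\lambda=0$. This gives $\Lambda_i = s J_i = |\lambda| I + sN$. If $s=-1$, conjugate additionally by the constant $D=\diag(1,-1,1,\dots)$, which flips the sign of the superdiagonal. The result is $\Lambda_i=|\lambda| I + N\ge 0$. It is triangular with diagonal entries $|\lambda|<1$, hence Schur.

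For a complex pair $\rho e^{\pm \mathrm{i}\theta}$, the real Jordan block of size $m$ is $J_i = I_m\otimes \rho R(\theta) + N\otimes I_2$, with $R(\cdot)$ the $2\times 2$ rotation. Take $Q_t^i=\diag\bigl(R(\varphi_1-\theta t),\dots,R(\varphi_m-\theta t)\bigr)$ with phases $\varphi_j$ to be chosen. Since rotations commute, the $j$-th diagonal $2\times2$ block of $Q_{t+1}^iJ_i(Q^i_t)^{-1}$ equals
\[
R(\varphi_j-\theta(t+1))\,\rho R(\theta)\,R(\theta t-\varphi_j)=\rho I_2 .
\]
The $(j,j+1)$ block equals $R(\varphi_j-\varphi_{j+1}-\theta)$. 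Choosing $\varphi_{j+1}=\varphi_j-\theta$ makes it $I_2$. Hence $\Lambda_i=\rho I_{2m}+N\otimes I_2$, which is constant, non-negative, and upper triangular with diagonal $\rho<1$, hence Schur.

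Setting $\Lambda=\diag(\Lambda_1,\dots,\Lambda_r)$ then gives a constant non-negative Schur matrix, as required. The main obstacle I anticipate is precisely this phase bookkeeping in the complex case. A single common rotation $R(-\theta t)$ on all sub-blocks would leave off-diagonal terms $R(-\theta)$ with possibly negative entries. The staggered phases $\varphi_j$ are what fix this, and I would check that step carefully. Everything else is routine block algebra.
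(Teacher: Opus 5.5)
Your proof is correct and follows essentially the same route as the result the paper cites, \citep[Theorem~4]{mazenc2014interval}, which the paper invokes without reproducing its proof: a constant real-Jordan similarity followed by time-varying orthogonal sign/rotation factors, so that the uniform bounds on $P_t$ and $P_t^{-1}$ come from $\|S\|$ and $\|S^{-1}\|$ alone (the scalar choice $P_t=(-1)^t$ in Example~\ref{eg3} is exactly your negative-eigenvalue step). Make explicit that the constant sign matrix $D$ is absorbed into the factor, i.e.\ $Q_t^i=(-1)^tD$, which stays orthogonal; your staggered phases $\varphi_{j+1}=\varphi_j-\theta$ do correctly turn the off-diagonal $2\times 2$ blocks into $I_2$.
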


\section{OBSERVABILITY DECOMPOSITION OF PERTURBED DETECTABLE LTI SYSTEMS}
Consider a perturbed CT$\slash$DT LTI system
\begin{equation}\label{eq:sys}
    x_t^+=Fx_t + u_t + Dd_t, \qquad
        y_t=Hx_t + Ww_t,
\end{equation}
with state $x_t\in\RR^{n_x}$, known input $u_t \in \RR^{n_x}$, measured output $y_t \in \RR^{n_y}$, and unknown disturbance and noise $(d_t,w_t)$ in $\RR^{n_d} \times \RR^{n_w}$. We design for system~\eqref{eq:sys} an \emph{interval observer} as defined in~\citep[Def. $1$]{Mazenccontinu} for CT and~\citep[Def. $1$]{mazenc2014interval} for DT.

If the pair $(F,H)$ is observable, then we systematically use the Sylvester-based design in~\citep{baoThachAut}, which involves simply solving a Sylvester equation for a constant matrix $T$, such that $Tx_t$ follows linear, stable, and Metzler$\slash$non-negative dynamics, for which we can always build an interval observer, and inverting this linear transformation to recover the $x$-coordinates bounds. However, in this work, we tackle the case where only a part of the state $x_t$ is observable and the rest is detectable. 
The lack of observability of the full state renders the method in~\citep{baoThachAut} inapplicable since $T$ would then be non-invertible. We could have come back to using Jordan form-based transformations for the whole $F$ as in~\citep{Mazenccontinu} for CT and~\citep{mazenc2014interval} for DT, which requires only detectability, but here we choose another path based on decomposition as follows. 

Let the observability matrix of system~\eqref{eq:sys} be 
\begin{equation}
\O := \row(H,H F, \ldots, HF^{n_x-1}),
\end{equation}
and assume it is of non-zero rank $n_{o}:=\dim \Img \O< {n_x}$. This means that only a non-empty part of $x_t$ of dimension $n_o < n_x$ is observable from $y_t$. The remaining part, assumed to be (asymptotically) detectable, has dimension $n_{no}$, so that $n_o + n_{no} = n_x$. Consider a basis $(v_i)_{1 \leq i \leq n_x}$ of $\RR^{n_x}$ such that $(v_i)_{1 \leq i \leq n_o}$ is a basis of the observable subspace $\Img \O$ and $(v_i)_{n_o+1 \leq i \leq n_x}$ is a basis of the non-observable subspace $\ker \O$. Then, we define the invertible matrix $\M :=
\begin{pmatrix}
	\M_o & \M_{no}
\end{pmatrix} \in \RR^{n_x \times n_x}$ where\begin{subequations}
\begin{align}
	\M_o &:=
	\begin{pmatrix}
		v_1 & \ldots & v_{n_o}
	\end{pmatrix} \in \RR^{n_x \times n_o}, \\
	\M_{no} &:=
	\begin{pmatrix}
		v_{n_o+1} & \ldots & v_{n_x}
	\end{pmatrix} \in \RR^{n_x \times n_{no}},
\end{align}
\end{subequations}
which satisfies, by definition,
\begin{equation}
	\label{eq:prop_obszo}
	\O \M_{no} = 0, \qquad H \M_{no} = 0.
\end{equation}
We denote $\N := \M^{-1}$ which we decompose consistently into two parts
$\N =: \begin{pmatrix} \N_o \\ \N_{no}
\end{pmatrix}$,
so that $\N_o x_t$ represents the part of the state that is (differentially) observable from $y_t$. The change of coordinates
\begin{equation}\label{eq:trans}
	x_t \mapsto z_t = \begin{pmatrix}z_{o,t}\\ z_{no,t} \end{pmatrix} = \N x_t = \begin{pmatrix}\N_o  \\ \N_{no} \end{pmatrix}x_t,
\end{equation}
with the inverse transformation given by
\begin{equation}\label{eq:inv}
	x_t = \M z_t =\M_o z_{o,t} + \M_{no}z_{no,t},
\end{equation}
transforms system~\eqref{eq:sys}
into
\begin{subequations}\label{eq:sysz}
\begin{equation}
		\left\{
		\begin{array}{@{}r@{\;}c@{\;}l@{}}
			z_{o,t}^+ &=& F_o z_{o,t} + \N_o u_t + D_o d_t\\ z_{no,t}^+ &=& F_{noo}z_{o,t} + F_{no}z_{no,t} + \N_{no}u_t + D_{no}d_t
		\end{array}
		\right.
        \end{equation}
        with the output
        \begin{equation}
			y_t = H_oz_{o,t} + W w_t,
	\end{equation}
\end{subequations}
    where $F_o = \N_o F \M_o \in \RR^{n_o \times n_o}$, $D_o = \N_o D \in \RR^{n_o \times n_d}$, $F_{noo} = \N_{no} F \M_o \in \RR^{n_{no} \times n_o}$, $F_{no} = \N_{no} F \M_{no} \in \RR^{n_{no} \times n_{no}}$, $D_{no} = \N_{no}D \in \RR^{n_{no} \times n_d}$, and $H_o = H\M_o \in \RR^{n_y \times n_o}$. The form~\eqref{eq:sysz}, obtained from observability decomposition, applies to both CT and DT LTI systems. 
    Lemma~\ref{lem_decompose} establishes the properties of system~\eqref{eq:sysz}.    
\begin{lemma}\label{lem_decompose}
The following properties hold:
\begin{itemize}[leftmargin=*,nosep]
\item $\N_o F \M_{no} = 0$ and $H\M_{no} = 0$;
    \item The pair $(F_o,H_o)$ is observable;
    \item The matrix $F_{no}$ is Hurwitz (in CT) or Schur (in DT).
\end{itemize}
\end{lemma}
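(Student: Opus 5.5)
The plan is to exploit one structural fact: $\ker \O$ is $F$-invariant. Each of the three items then follows from the block structure of $\N F \M$ together with $\N\M = I$. Throughout, $\Img \O$ is read as the complement of $\ker \O$ spanned by $v_1,\dots,v_{n_o}$, so that $\M_{no}$ spans exactly $\ker \O$.

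\emph{First item.} The identity $H\M_{no}=0$ is immediate from \eqref{eq:prop_obszo}, since the first block row of $\O$ is $H$. For the other identity, take $x\in\ker\O$, so that $HF^k x=0$ for $k=0,\dots,n_x-1$. By Cayley--Hamilton, $HF^{n_x}x$ is a linear combination of these, hence also zero. Therefore $\O F x=0$, i.e.\ $F\ker\O\subseteq\ker\O$. Consequently there is a matrix $K\in\RR^{n_{no}\times n_{no}}$ with $F\M_{no}=\M_{no}K$. Since $\N\M=I$ gives $\N_o\M_{no}=0$, we obtain $\N_oF\M_{no}=\N_o\M_{no}K=0$. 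As a byproduct, $K=\N_{no}F\M_{no}=F_{no}$.

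\emph{Second item.} I will write $F\M=\M(\N F\M)$. By the first item, $\N F\M$ is block lower triangular with diagonal blocks $F_o$ and $F_{no}$. Induction on $k$ then gives $F^k\M_o=\M_oF_o^k+\M_{no}G_k$ for some matrices $G_k$. Using $H\M_{no}=0$, this yields $HF^k\M_o=H_oF_o^k$, so $\O\M_o=\row(H_o,H_oF_o,\dots,H_oF_o^{n_x-1})$. Since $\O\M_{no}=0$ and $\M$ is invertible, $\operatorname{rank}(\O\M_o)=\operatorname{rank}(\O\M)=\operatorname{rank}\O=n_o$. Thus this extended observability matrix of $(F_o,H_o)$ has full column rank $n_o$. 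By Cayley--Hamilton applied to $F_o$ (note $n_o\le n_x$), the rows $H_oF_o^k$ with $k\ge n_o$ add nothing to the rank, so the standard observability matrix of $(F_o,H_o)$ also has rank $n_o$, which is observability.

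\emph{Third item.} This is where detectability enters, and I expect it to be the only delicate step, mainly in fixing the notion of detectability used. I would use the PBH characterization: $(F,H)$ is detectable if and only if every eigenvector $w$ of $F$ with $Hw=0$ has eigenvalue $\lambda$ with $\mathrm{Re}\,\lambda<0$ in CT, or $|\lambda|<1$ in DT.

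Let $F_{no}v=\lambda v$ with $v\in\mathbb{C}^{n_{no}}\setminus\{0\}$, and set $w=\M_{no}v\neq 0$. Then $Fw=\M_{no}F_{no}v=\lambda w$ by the invariance relation $F\M_{no}=\M_{no}F_{no}$, and $Hw=0$. So $w$ is an unobservable eigenvector of $F$, and detectability forces $\lambda$ into the stable region. Since this holds for every eigenvalue of $F_{no}$, the matrix $F_{no}$ is Hurwitz in CT and Schur in DT.

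If the paper's notion of detectability is instead the trajectory-based one (solutions with equal outputs converge to each other), I would first derive the PBH condition from it. The solution $x_t=\mathrm{Re}(e^{\lambda t}w)$ in CT, or $\mathrm{Re}(\lambda^t w)$ in DT, produces zero output, just as the zero solution does. So it must converge to zero, which forces $\lambda$ to be stable.
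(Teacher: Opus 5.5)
Your proof is correct and follows the same route as the paper: the paper proves the first two items by citing a textbook lemma on the Kalman observability decomposition and the third by appealing to the definition of detectability, and you supply those details yourself ($F$-invariance of $\ker\O$ via Cayley--Hamilton, the rank argument for $(F_o,H_o)$, and the eigenvector argument showing that the unobservable block $F_{no}$ is stable). One small fix in your last paragraph: when $\lambda$ is real, choose $v$ (hence $w$) real, because otherwise $\mathrm{Re}(e^{\lambda t}w)$ or $\mathrm{Re}(\lambda^t w)$ can vanish identically (e.g.\ for $w$ purely imaginary) and then gives no contradiction.
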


\begin{pf}
    The first two items are the properties of the projection of $(F,H)$ onto the observable subspace---see~\citep[Lemma $4.2$]{antsaklis1997linear}. The third item follows from the definition of asymptotic detectability of LTI systems: non-observable state components (in our case $z_{no,t}$) must be asymptotically stable. \hfill $\blacksquare$
\end{pf}

The following academic example, re-visited throughout this paper, illustrates the observability decomposition.
\begin{example}\label{eg1}
    Consider a DT system of form~\eqref{eq:sys} with $F = \left(\begin{smallmatrix}
       -1 &0 &1 &0\\ 1 &-0.5& -1 &1\\ 0 &0 &0& -1\\ 0& 0 &-1 &0
    \end{smallmatrix}\right)$ and $H = \begin{pmatrix}
        1 & 0 & 1 & 1
    \end{pmatrix}$. Consider known input $u_t = (\sin(t),0,-0.5\cos(t),0)$, input disturbance $d_t = 0.02\cos(5t)$ with gain $D = (1,0,0,0)$, and measurement noise $w_t = 0.01\sin(20t)$ with gain $W = 1$. We would like to build an interval observer for this system, assuming some bounds on the disturbance and noise. The observability matrix $\O = \left(\begin{smallmatrix}
        H \\ HF \\ HF^2\\  HF^3
    \end{smallmatrix}\right)= \left(\begin{smallmatrix}
        1 & 0 & 1 & 1 \\ -1 & 0 & 0 & -1 \\ 1 & 0 & 0 &0  \\ -1 & 0 & 1 & 0
   \end{smallmatrix}\right)$ is of rank $3$ (the second state is only detectable), so the method in~\citep{baoThachAut} is inapplicable. Indeed, when solving $TF = AT + BH$ for some $A$ Schur with different eigenvalues from $F$, the obtained $T$ is non-invertible. We could have used the Jordan-based transformation in~\citep{mazenc2014interval}, which requires computing a $4\times4$ matrix and its inversion online at each iteration, followed by incorporating it into the observer. Following this work, we now perform an observability decomposition for this system, with $\M_o = \left(\begin{smallmatrix}
            1 & 0 & 0 \\0& 0 & 0\\0 & 1 & 0\\0 & 0 & 1
    \end{smallmatrix}\right)$ and $\M_{no} = \left(\begin{smallmatrix}
            0 \\1 \\0\\0 
   \end{smallmatrix}\right)$. We get $F_o = \left(\begin{smallmatrix}
        -1  &   1 &   0\\
     0  &   0  &  -1\\
     0  &  -1 &    0
    \end{smallmatrix}\right)$ and $H_o = \begin{pmatrix}
        1 & 1 & 1
    \end{pmatrix}$, which form an observable pair. We also get $F_{no} = -0.5$, which is Schur since $x_2$ is detectable.
\end{example}


\section{DECOMPOSITION-BASED\\INTERVAL OBSERVER DESIGN}
In this section, we design a decomposition-based interval observer for system~\eqref{eq:sys}. We start by making the following standard assumptions about the bounds of the initial condition and uncertainties.

\begin{assumption}\label{ass_sys}
    For system~\eqref{eq:sys}, we assume that:
\begin{itemize}[leftmargin=*,nosep]
    \item There exist $(\overline x_0,\underline x_0) \in \RR^{n_x} \times \RR^{n_x}$ such that $\underline x_0 \leq x_0 \leq \overline x_0$;
    \item There exist known finite vectors $(\overline d_t,\underline d_t,\overline w_t,\underline w_t) \in \RR^{n_d} \times \RR^{n_d} \times \RR^{n_w} \times \RR^{n_w}$ such that $\underline d_t \leq d_t \leq \overline d_t$ and $\underline w_t \leq w_t \leq \overline w_t$ for all $t \geq 0$ (in CT) or $t \in \NN$ (in DT).
\end{itemize}
\end{assumption}

Our decomposition-based interval observer is illustrated in Figure~\ref{fig:obs}. It consists of two designs based on existing methods, one for $z_{o,t}$, another for $z_{no,t}$, followed by an inversion of the decomposition that preserves the interval and convergence properties. In the next sections, we present the respective designs for each part of the state. 
\begin{figure*}[ht]
    \centering
\begin{tikzpicture}[scale=1.75, every node/.style={align=center}]
\node[draw, very thick, text width=3cm, minimum height=1cm, fill=orange!20] (System) at (0.7, 1) {System with state $x_t$};
\node[draw, very thick, text width=3cm, minimum height=1cm, fill=red!20] (Observer1) at (4, 1) {Observer for $z_{o,t}$};
\node[draw, very thick, text width=3cm, minimum height=1cm, fill=blue!20] (Observer2) at (4, -0.2) {Observer for $z_{no,t}$};
\node[draw, very thick, text width=2cm, minimum height=1cm, fill=green!20] (Inversion) at (6.5, 0.4) {Inversion};
\draw[very thick, ->] (System.east) -- (Observer1.west) node[midway, above] {$(y_t,w_t)$};
\draw[very thick, ->] (Observer1.south) -- (Observer2.north) node[midway, right] {$(\overline{z}_{o,t}, \underline{z}_{o,t})$};
\draw[very thick, ->] (Observer1.east) -| (Inversion.north) node[midway, above] {$(\overline{z}_{o,t}, \underline{z}_{o,t})$};
\draw[very thick, ->] (Observer2.east) -| (Inversion.south) node[midway, below] {$(\overline{z}_{no,t}, \underline{z}_{no,t})$};
\draw[very thick, ->] (Inversion.east) -- ++(1.3, 0) node[midway, above] {$(\overline{x}_{t}, \underline{x}_{t})$};
\draw[very thick, ->] (-1.5, 1) -- (System.west) node[midway, above] {$d_t$};
\draw[dashed, very thick] (2.9,-0.68) rectangle (7.3,1.65);

\node[align=center] at (3.4, 1.50) {Observer};
\end{tikzpicture}
    \caption{Structure of the decomposition-based interval observer. The known input $u_t$, compensated in the observers, is neglected for clarity.}
    \label{fig:obs}
\end{figure*}
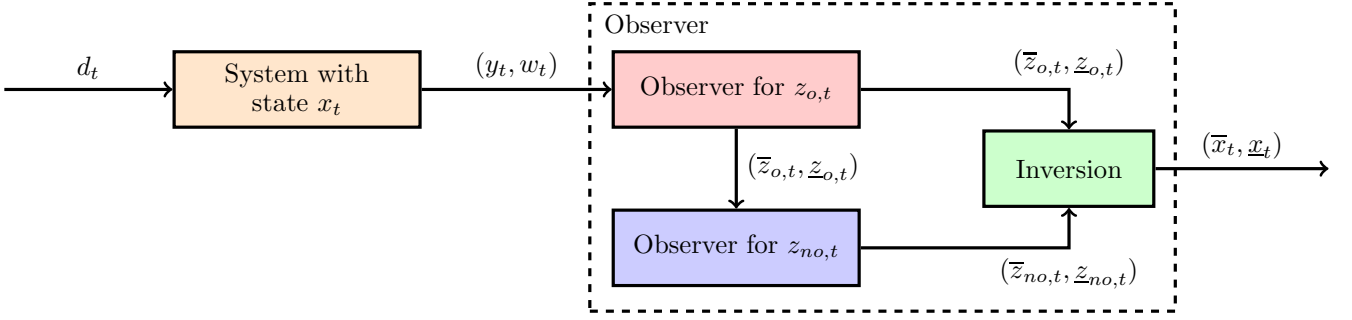

\subsection{Sylvester-Based Design for $z_{o,t}$}
We now design an interval observer for the $z_{o,t}$ part of system~\eqref{eq:sysz}. This is based on the dynamics of $z_{o,t}$ and measurement $y_t$, which are extracted as
\begin{equation}\label{eq:syszo}
    z_{o,t}^+ = F_o z_{o,t} + \N_o u_t + D_o d_t, \quad y_t = H_o z_{o,t} + Ww_t.
\end{equation}
Based on our recent work~\citep{baoThachAut}, we consider an LTI transformation $z_{o,t} \mapsto T z_{o,t}$, where $T \in \RR^{n_o \times n_o}$ is solution to the Sylvester equation
\begin{equation}\label{eq:sylvester}
    TF_o = A_oT + B_oH_o,
\end{equation}
with $A_o \in \RR^{n_o \times n_o}$ Metzler and Hurwitz (in CT) or non-negative and Schur (in DT), and some $B_o  \in \RR^{n_o \times n_y}$, which exists uniquely if and only if the eigenvalues of $A_o$ are different from those of $F_o$. To come back to the $z_o$-coordinates, we rely on the invertibility of $T$, which is guaranteed by the following lemma.

\begin{lemma}\label{lem_T}
For almost any $(A_o,B_o) \in \RR^{n_o \times n_o} \times \RR^{n_o \times n_y}$ with $A_o$ Hurwitz (in CT) or Schur (in DT), $T$ solution to~\eqref{eq:sylvester} exists uniquely and is invertible.
\end{lemma}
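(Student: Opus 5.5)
The plan is to split the claim into existence/uniqueness and invertibility, and to handle both with a genericity (Zariski-open set) argument in the entries of $(A_o,B_o)$.

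\emph{Existence and uniqueness.} The linear map $T\mapsto TF_o - A_oT$ on $\RR^{n_o\times n_o}$ has eigenvalues $\mu_j-\lambda_i$, where $\mu_j$ are the eigenvalues of $F_o$ and $\lambda_i$ those of $A_o$. It is therefore invertible exactly when $\det$ of its Kronecker representation $F_o^\top\otimes I - I\otimes A_o$ is nonzero. This determinant is a polynomial in the entries of $A_o$, and it is not identically zero on Hurwitz/Schur matrices: take $A_o$ diagonal with stable eigenvalues avoiding the spectrum of $F_o$. Hence the pairs for which $T$ fails to exist uniquely form a proper algebraic subset, which has measure zero. On its complement, $T$ is given by Cramer's rule as a rational function of $(A_o,B_o)$. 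Consequently $\det T$ is a rational function whose denominator does not vanish on that complement.

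\emph{Invertibility.} It suffices to show that the numerator of $\det T$ is not identically zero, i.e.\ to exhibit a single admissible $(A_o,B_o)$ with $T$ invertible. The zero set of a nonzero polynomial has Lebesgue measure zero, so the conclusion then holds for almost every pair in the open set of Hurwitz/Schur $A_o$. I expect constructing this witness to be the main obstacle, and I would use the observability of $(F_o,H_o)$ from Lemma~\ref{lem_decompose}.

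\begin{itemize}
\item Pick any row $c\in\RR^{1\times n_y}$ such that $(F_o,cH_o)$ remains observable. This holds for generic $c$ (Heymann-type argument), and in the single-output case of the examples it is immediate.
\item Take $A_o$ in companion form, with a characteristic polynomial that is stable and coprime with that of $F_o$, and set $B_o = e_{n_o}c$.
\item By the classical Luenberger result, the solution of $TF_o = A_oT + B_oH_o$ with $(A_o,B_o)$ controllable, $(F_o,cH_o)$ observable, and disjoint spectra is nonsingular. One sees this by writing $T = \sum_k \alpha_k\,(\cdot)$, or equivalently $T\,\chi_{A_o}(F_o) = \mathcal{C}(A_o,B_o)\,\Gamma\,\O(F_o,cH_o)$ with $\Gamma$ a nonsingular Toeplitz matrix. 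Since $\chi_{A_o}(F_o)$ is invertible by coprimeness, $T$ is invertible.
\end{itemize}

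If generic $c$ is inconvenient, an alternative witness is to choose $A_o$ with distinct real stable eigenvalues $\lambda_i$. The rows of $T$ are then $t_i = b_iH_o(F_o-\lambda_i I)^{-1}$, and one shows that these rows can be made independent for suitable $b_i$, using observability to guarantee that the rational row functions span $\RR^{1\times n_o}$.
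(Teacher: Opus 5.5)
Your reduction is sound, and it is essentially the skeleton behind the paper's citation-based proof. Disjoint spectra give existence and uniqueness. The numerator of $\det T$ is a polynomial, whose zero set has measure zero if it is not identically zero. The polynomial formula $T\chi_{A_o}(F_o)=\mathcal{C}\,\Gamma\,\O$ combined with observability then handles invertibility. The gap is in your main witness. You claim that $(F_o,cH_o)$ is observable for generic $c$, but this is false whenever $F_o$ is not cyclic, because a single-output pair can be observable only if $F_o$ is non-derogatory. For example, $F_o=\mu I_2$, $H_o=I_2$ is observable, yet $\O(F_o,cH_o)=\left(\begin{smallmatrix} c\\ \mu c\end{smallmatrix}\right)$ has rank one for every $c$. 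Nothing in the decomposition excludes such $F_o$ when $n_y\geq 2$. Heymann's lemma does not give what you use: it gives, for any $c$ with $cH_o\neq 0$, an output injection $L$ such that $(F_o+LH_o,cH_o)$ is observable. As written, your companion-form witness therefore only covers cyclic $F_o$, in particular $n_y=1$.

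Either of two repairs works.

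\emph{Use Heymann correctly.} Choose $A_o$ in companion form, stable, with characteristic polynomial coprime to those of both $F_o$ and $F_o+LH_o$. Let $\tilde T$ solve $\tilde T(F_o+LH_o)=A_o\tilde T+e_{n_o}cH_o$. Your single-output argument makes $\tilde T$ invertible. Moreover $\tilde TF_o=A_o\tilde T+(e_{n_o}c-\tilde TL)H_o$, so $(A_o,\,e_{n_o}c-\tilde TL)$ is a valid witness for \eqref{eq:sylvester}, with $\tilde T$ as its unique solution.

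\emph{Make your fallback the main argument and prove its key step, which you only assert.} Choose the pairs $(\lambda_i,b_i)$ greedily from an infinite set $S$ of real stable numbers outside the spectrum of $F_o$. Suppose you have independent rows $t_1,\dots,t_k$ with $k<n_o$, and every row $bH_o(F_o-\lambda I)^{-1}$ with $\lambda\in S\setminus\{\lambda_1,\dots,\lambda_k\}$ lies in their span. Then a nonzero $v$ with $t_jv=0$ for all $j$ satisfies $H_o(F_o-\lambda I)^{-1}v=0$ for infinitely many $\lambda$, hence identically as a rational function. Expanding at $\lambda=\infty$ gives $H_oF_o^kv=0$ for all $k$, which contradicts observability. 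This produces distinct stable $\lambda_i$ and rows $b_i$ for which $T$ is invertible, for any observable $(F_o,H_o)$ and any $n_y$. That is exactly the witness your polynomial argument needs.
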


\begin{pf}
The existence and uniqueness of $T$ are guaranteed when $A_o$ does not share any eigenvalues with $F_o$~\citep{sylvester}, which is true for almost any $A_o$. The proof of invertibility of $T$ uses~\citep{caronrichard,hucheng} and follows from the observability of $(F_o,H_o)$ and the fact that almost any pair of matrices is controllable~\citep{brivadisRemarks}. See~\citep{baoThachAut} for more details. \hfill $\blacksquare$
\end{pf}

The term \emph{almost any} means that the set of pairs $(A_o,B_o)$ making $T$ non-invertible has zero Lebesgue measure with respect to the set of real matrices. Note that the results in Lemma~\ref{lem_T} become \emph{for any} when $n_y = 1$~\citep{luenberger}, with the additional conditions about $A_o$ not sharing any eigenvalue with $F_o$ and $(A_o,B_o)$ being controllable. For interval observer design, we will moreover select $A_o$ Metzler (in CT) or non-negative (in DT), transforming system~\eqref{eq:syszo} into the proper form for interval observer design by an LTI transformation.

Thanks to the dynamics of $z_{o,t}$ and the output $y_t$ being independent of $z_{no,t}$, denoting $T^{-1}$ as the inverse matrix of $T$, we propose a Sylvester-based LTI interval observer for system~\eqref{eq:syszo}, written directly in the $z_o$-coordinates  following~\citep{baoThachAut} with the dynamics
\begin{subequations}\label{eq:obszo}
\begin{align}
        \overline{\hat{z}}_{o,t}^+ &= F_o \overline{\hat{z}}_{o,t} + \N_o u_t + T^{-1}B_o (y_t - H_o \overline{\hat{z}}_{o,t})  \notag\\&\qquad{} + T^{-1} ((TD_o)^\oplus \overline d_t - (TD_o)^\ominus \underline d_t)  \notag\\&\qquad{}+ T^{-1} ((B_oW)^\ominus \overline w_t - (B_oW)^\oplus \underline w_t),\\
       \underline{\hat{z}}^+_{o,t}&= F_o \underline{\hat{z}}_{o,t} + \N_o u_t + T^{-1}B_o (y_t - H_o \underline{\hat{z}}_{o,t}) \notag\\&\qquad{}+ T^{-1}((TD_o)^\oplus \underline d_t - (TD_o)^\ominus \overline d_t) \notag\\&\qquad{} + T^{-1}((B_oW)^\ominus \underline w_t - (B_oW)^\oplus \overline w_t),
    \end{align}
associated with the initial conditions
\begin{align}\label{eq:obszo_init}
        \overline{\hat{z}}_{o,0} &= T^{-1}((T\N_o)^\oplus \overline{x}_0-(T\N_o)^\ominus \underline{x}_0),\\\label{eq:initxlinm}
        \underline{\hat{z}}_{o,0} &= T^{-1}((T\N_o)^\oplus \underline x_0-(T\N_o)^\ominus \overline x_0),
\end{align}
with the bounds after time $0$
\begin{align}\label{eq:obszo_bdz} 
      \overline z_{o,t}&= (T^{-1})^\oplus T\overline{\hat{z}}_{o,t}- (T^{-1})^\ominus T\underline{\hat{z}}_{o,t}, \\\label{eq:initbxlinm}
        \underline z_{o,t}&= (T^{-1})^\oplus T\underline{\hat{z}}_{o,t} - (T^{-1})^\ominus T\overline{\hat{z}}_{o,t},
\end{align}
and the bounds at time $0$ 
\begin{align}
\overline z_{o,0}&=\N_o^\oplus \overline{x}_0 - \N_o^\ominus \underline{x}_0,\\
\underline z_{o,0}&=\N_o^\oplus \underline{x}_0 - \N_o^\ominus \overline{x}_0,
\end{align}
\end{subequations}
with $T \in \RR^{n_o \times n_o}$ solution to~\eqref{eq:sylvester}, for some $(A_o,B_o)$ picked as detailed above. Though observer~\eqref{eq:obszo} does not explicitly contain the chosen $A_o$, its dependence on $A_o$ is implicit in the resulting $T$, which will affect performance but is left for future studies. The following lemma can then be stated.

\begin{lemma}
   Suppose Assumption~\ref{ass_sys} holds. If $T$ solution to~\eqref{eq:sylvester} is invertible, observer~\eqref{eq:obszo} is an interval observer for system~\eqref{eq:syszo}.
\end{lemma}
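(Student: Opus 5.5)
Our plan is to work in the transformed coordinates $\xi_t := T z_{o,t}$. From~\eqref{eq:syszo} and~\eqref{eq:sylvester}, and writing $H_o z_{o,t} = y_t - Ww_t$, we get
\begin{equation*}
\xi_t^+ = TF_o z_{o,t} + T\N_o u_t + TD_o d_t = A_o \xi_t + B_o y_t - B_oWw_t + T\N_o u_t + TD_o d_t .
\end{equation*}
Then we set $\overline\xi_t := T\overline{\hat z}_{o,t}$ and $\underline\xi_t := T\underline{\hat z}_{o,t}$. Multiplying~\eqref{eq:obszo} by $T$ and using~\eqref{eq:sylvester} again in the form $TF_o - B_oH_o = A_oT$, we obtain
\begin{align*}
\overline\xi_t^+ &= A_o\overline\xi_t + B_o y_t + T\N_o u_t + (TD_o)^\oplus\overline d_t - (TD_o)^\ominus\underline d_t + (B_oW)^\ominus\overline w_t - (B_oW)^\oplus\underline w_t ,
\end{align*}
and a symmetric expression for $\underline\xi_t$. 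Here $y_t$ and $u_t$ are known, so this is an implementable LTI system.

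\textbf{Inclusion.} Define the errors $\overline e_t := \overline\xi_t - \xi_t$ and $\underline e_t := \xi_t - \underline\xi_t$. Subtracting the dynamics gives
\begin{equation*}
\overline e_t^+ = A_o\overline e_t + \overline\delta_t ,
\end{equation*}
where
\begin{equation*}
\overline\delta_t = \bigl[(TD_o)^\oplus\overline d_t - (TD_o)^\ominus\underline d_t - TD_od_t\bigr] + \bigl[(B_oW)^\ominus\overline w_t - (B_oW)^\oplus\underline w_t + B_oWw_t\bigr].
\end{equation*}
By Lemma~\ref{lem_pm} and Assumption~\ref{ass_sys}, both brackets are non-negative; for the second, apply Lemma~\ref{lem_pm} to $-B_oW$, noting that $(-M)^\oplus = M^\ominus$. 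The same holds for $\underline\delta_t$.

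At time $0$, \eqref{eq:obszo_init} gives $\overline\xi_0 = (T\N_o)^\oplus\overline x_0 - (T\N_o)^\ominus\underline x_0$, while $\xi_0 = T\N_o x_0$. Lemma~\ref{lem_pm} then yields $\overline e_0 \geq 0$, and similarly $\underline e_0 \geq 0$.

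Since $A_o$ is Metzler (CT) or non-negative (DT), the error systems are positive systems with non-negative inputs. Hence $\overline e_t, \underline e_t \geq 0$ for all $t$, i.e. $\underline\xi_t \leq Tz_{o,t} \leq \overline\xi_t$. In CT this uses the standard fact that $e^{A_ot} \geq 0$ for Metzler $A_o$, together with the variation-of-constants formula. In DT it is immediate by induction.

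Finally, $z_{o,t} = T^{-1}\xi_t$. Applying Lemma~\ref{lem_pm} with matrix $T^{-1}$ and bounds $\underline\xi_t, \overline\xi_t$ gives exactly~\eqref{eq:obszo_bdz} and~\eqref{eq:initbxlinm} for $t > 0$. At $t = 0$, Lemma~\ref{lem_pm} applied to $z_{o,0} = \N_o x_0$ gives the stated initial bounds directly.

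\textbf{Convergence/boundedness.} The second requirement of the interval-observer definition in~\citep{Mazenccontinu,mazenc2014interval} is that the bounds stay bounded and that their width converges when the uncertainties vanish. Consider the width $\xi$-error $\overline\xi_t - \underline\xi_t$. It obeys
\begin{equation*}
(\overline\xi_t - \underline\xi_t)^+ = A_o(\overline\xi_t - \underline\xi_t) + |TD_o|(\overline d_t - \underline d_t) + |B_oW|(\overline w_t - \underline w_t),
\end{equation*}
where $|M| = M^\oplus + M^\ominus$. Since $A_o$ is Hurwitz (CT) or Schur (DT), this is input-to-state stable with respect to the bounded widths of the uncertainty intervals. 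Consequently, $\overline e_t$ and $\underline e_t$ are bounded, and they converge to zero when those widths vanish.

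Boundedness of $\overline z_{o,t} - \underline z_{o,t} = |T^{-1}|T(\overline{\hat z}_{o,t} - \underline{\hat z}_{o,t})$ then follows from the constant matrices involved. In addition, $z_{o,t}$ itself is not required to be bounded, because only differences are constrained.

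\textbf{Main obstacle.} The delicate step is the bookkeeping of the $\oplus/\ominus$ signs for the noise term, which enters as $-B_oWw_t$. Once that is settled, positivity of the Metzler/non-negative error system is routine.
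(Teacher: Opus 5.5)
Your proposal is correct and follows the same route as the paper: you pass to $\xi_t = Tz_{o,t}$ via the Sylvester equation~\eqref{eq:sylvester} to get a Metzler/non-negative system driven by $TD_od_t$ and $-B_oWw_t$, bound the initial conditions and uncertainty terms with Lemma~\ref{lem_pm}, and return to the $z_o$-coordinates through $T^{-1}$ with Lemma~\ref{lem_pm} again. The paper only sketches this and defers to its earlier Sylvester-based design, whereas you write out the positivity of the error dynamics and the input-to-state stability of the interval width, which is the content of that deferred argument.
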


\begin{pf}
    This is done by adapting the design in~\citep{baoThachAut} (the one written in the original coordinates). The initial conditions are because $Tz_{o,t} = T\N_ox_t$. The uncertainty bounds are due to the resulting disturbance $TD_od_t$ and noise $B_oWw_t$ after the transformation with $T$. The initial bounds follow from $z_{o,0} = \N_ox_0$.\hfill $\blacksquare$
\end{pf}

Note again that $T$ is almost always invertible---see Lemma~\ref{lem_T}. Therefore, our design does not have any additional conditions on system~\eqref{eq:sys} compared to traditional methods. Observer~\eqref{eq:obszo} is illustrated in the next example.
\begin{example}\label{eg2}
    Consider the system in Example~\ref{eg1}, after observability decomposition. For interval observer design, we assume the bounds $\overline{x}_0 = (1,1,1,1)$, $\underline{x}_0 = -\overline{x}_0$ for the initial condition, $\overline{d}_t = 0.02$, $\underline{d}_t = -0.02$ for the disturbance, and $\overline{w}_t = 0.01$, $\underline{w}_t = -0.01$ for the measurement noise. Our decomposition allows us to build for $z_{o,t}$ an interval observer~\eqref{eq:obszo} independently of $z_{no,t}$, regardless of whether $F_o$ is Metzler$\slash$non-negative or not (in our case, it is not). Recall that $(F_o,H_o)$ is observable with $n_y = 1$. By picking $A_o = \diag(0.1,0.2,0.3)$ (non-negative and Schur, no common eigenvalues with $F_o$) and $B_o = (1,1,1)$ (making $(A_o,B_o)$ controllable), we solve~\eqref{eq:sylvester} for an invertible matrix $T \in \RR^{3\times3}$. Observer~\eqref{eq:obszo} is then implemented, giving the results in Figure~\ref{fig:zo}. The state is initialized at $0$.
\end{example}
\begin{figure}[h]    \includegraphics[width=\columnwidth,height=0.34\columnwidth]{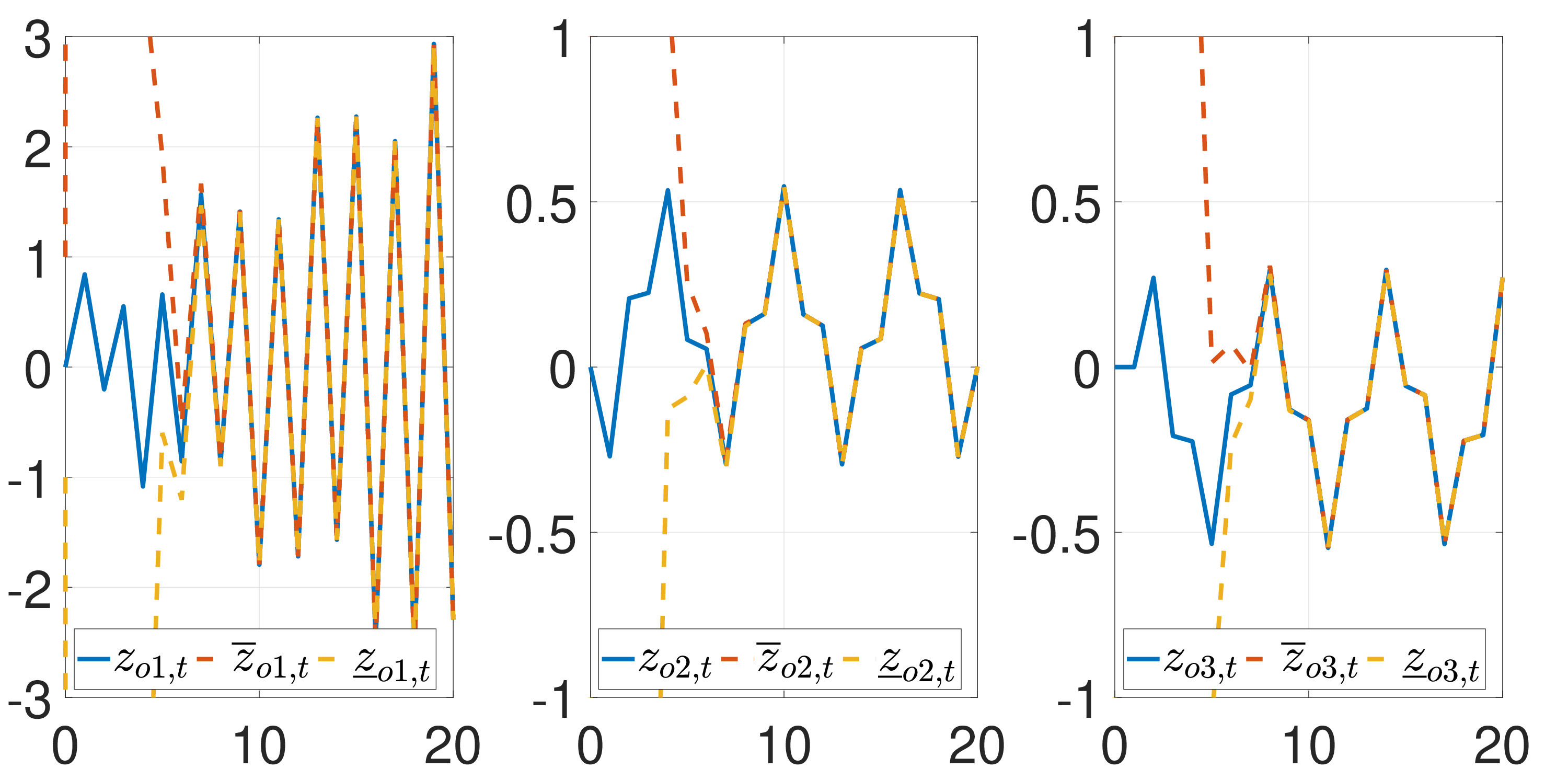}    \includegraphics[width=\columnwidth,height=0.34\columnwidth]{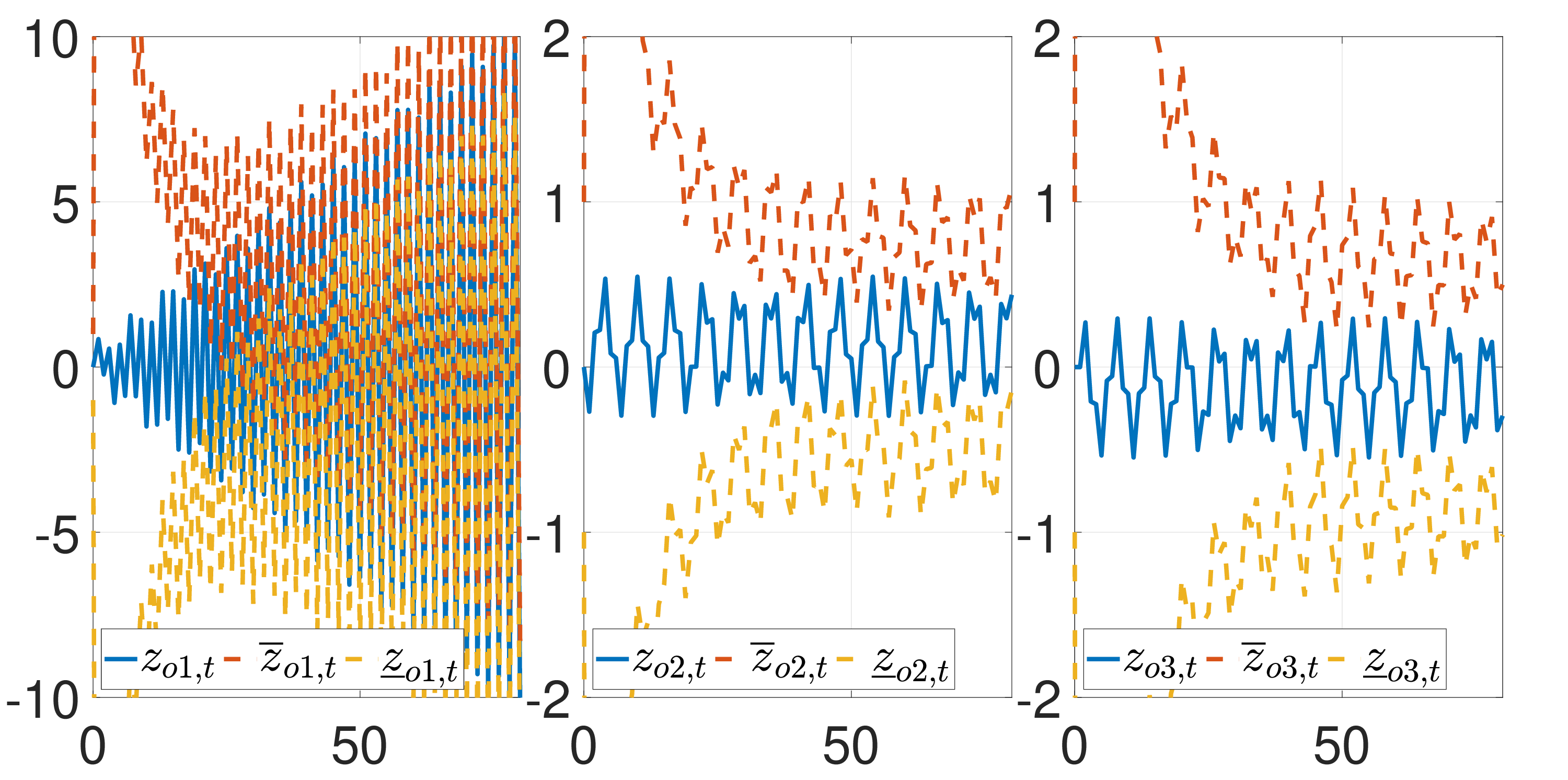}
        \caption{Interval estimation for $z_{o,t}$. Top: Convergence in the absence of $(d_t,w_t)$; bottom: Intervals in the presence of $(d_t,w_t)$.}
    \label{fig:zo}
\end{figure}

Now, we can guarantee interval in the $z_o$-coordinates. However, this is insufficient for recovering the bounds in the $x$-coordinates, because of $z_{no,t}$. We must then build another interval observer for $z_{no,t}$ in cascade with~\eqref{eq:obszo}. This is done in the next section.

\subsection{Jordan-Based Design for $z_{no,t}$}
In this section, we design an interval observer for $z_{no,t}$. The presence of $z_{o,t}$ in the dynamics of $z_{no,t}$ motivates us to see it as an exogenous signal with estimated bounds of
\begin{equation}\label{eq:syszno}
    z_{no,t}^+ = F_{noo}z_{o,t} + F_{no}z_{no,t} + \N_{no}u_t + D_{no}d_t,
\end{equation}
with $F_{no}$ stable, and no measured output. Therefore, when designing an interval observer for system~\eqref{eq:syszno}, we treat $F_{noo}z_{o,t}$ as another $D_{no}d_t$, with bounds provided by observer~\eqref{eq:obszo} that is independently guaranteed to work. In the CT case, following Lemma~\ref{lem_Pc}, given that $F_{no}$ is Hurwitz, there exist a time-varying norm-bounded matrix $P_t \in \RR^{n_{no} \times n_{no}}$ and a Metzler Hurwitz constant matrix $\Lambda\in \RR^{n_{no} \times n_{no}}$ such that $P_t^+ = \Lambda P_t - P_t F_{no}$. In the DT case, following Lemma~\ref{lem_Pd}, given that $F_{no}$ is Schur, there exist a sequence of norm-bounded matrices $(P_t)_{t \in \NN}$ in $\RR^{n_{no} \times n_{no}}$ and a non-negative Schur constant matrix $\Lambda\in \RR^{n_{no} \times n_{no}}$ such that $\Lambda = P_t^+ F_{no} (P_t)^{-1}$. The time-varying interval observer for system~\eqref{eq:syszno} has dynamics
\begin{subequations}\label{eq:obszno}
\begin{align}
\overline{\hat{z}}_{no,t}^+ & = \Lambda\overline{\hat{z}}_{no,t} + \Sigma_t\N_{no}u_t\notag\\&\qquad{}+ (\Sigma_t F_{noo})^\oplus \overline{z}_{o,t} - (\Sigma_t F_{noo})^\ominus \underline{z}_{o,t}\notag\\&\qquad{}+(\Sigma_t D_{no})^\oplus \overline{d}_t - (\Sigma_t D_{no})^\ominus \underline{d}_t,\\
\underline{\hat{z}}_{no,t}^+ & = \Lambda\underline{\hat{z}}_{no,t} + \Sigma_t\N_{no}u_t\notag\\&\qquad{}+ (\Sigma_t F_{noo})^\oplus \underline{z}_{o,t} - (\Sigma_t F_{noo})^\ominus \overline{z}_{o,t}\notag\\&\qquad{}+(\Sigma_t D_{no})^\oplus\underline{d}_t - (\Sigma_t D_{no})^\ominus \overline{d}_t,
\end{align}
with $\Sigma_t = P_t$ in CT and $\Sigma_t = P_t^+$ in DT, with $(\overline{z}_{o,t},\underline{z}_{o,t})$ from observer~\eqref{eq:obszo}, associated with the initial conditions
\begin{align}
\overline{\hat{z}}_{no,0} &= (P_0 \N_{no})^\oplus\overline{x}_0  - (P_0\N_{no})^\ominus\underline{x}_0,\\
\underline{\hat{z}}_{no,0} &= (P_0 \N_{no})^\oplus\underline{x}_0  - (P_0\N_{no})^\ominus\overline{x}_0,
\end{align}
with the bounds after time $0$
\begin{align}
\overline{z}_{no,t} &= (P_t^{-1})^\oplus \overline{\hat{z}}_{no,t} - (P_t^{-1})^\ominus \underline{\hat{z}}_{no,t},\\
\underline{z}_{no,t} &= (P_t^{-1})^\oplus \underline{\hat{z}}_{no,t} - (P_t^{-1})^\ominus \overline{\hat{z}}_{no,t},
\end{align}
and the bounds at time $0$ 
\begin{align}
\overline z_{no,0}&=\N_{no}^\oplus \overline{x}_0 - \N_{no}^\ominus \underline{x}_0,\\
\underline z_{no,0}&=\N_{no}^\oplus \underline{x}_0 - \N_{no}^\ominus \overline{x}_0,
\end{align}
\end{subequations}
with matrix $\Lambda$ and $t \mapsto P_t$ obtained as explained above. The following lemma can then be stated.
 
\begin{lemma}
   Suppose that Assumption~\ref{ass_sys} holds and that observer~\eqref{eq:obszo} is an interval observer for system~\eqref{eq:syszo}. Then, observer~\eqref{eq:obszno} is an interval observer for system~\eqref{eq:syszno}.
\end{lemma}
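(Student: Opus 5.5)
We verify the two defining properties of an interval observer (cf. \citep[Def. $1$]{Mazenccontinu} in CT and \citep[Def. $1$]{mazenc2014interval} in DT): the bounds are ordered and enclose the state for all times, and the interval width is bounded when the uncertainty bounds are bounded and converges to zero in the absence of uncertainties. The idea is to work in the coordinates $\xi_t := P_t z_{no,t}$. By Lemma~\ref{lem_Pc} (CT) and Lemma~\ref{lem_Pd} (DT), these coordinates obey Metzler/non-negative dynamics, and we then transfer the bounds back to $z_{no,t}$ through $P_t^{-1}$ using Lemma~\ref{lem_pm}.

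\textbf{Dynamics of $\xi_t$.} In CT, $\dot\xi = \dot P_t z_{no} + P_t \dot z_{no} = (\dot P_t + P_tF_{no})z_{no} + P_t(F_{noo}z_o + \N_{no}u + D_{no}d)$, so by Lemma~\ref{lem_Pc},
\begin{equation*}
\dot\xi = \Lambda\xi + P_tF_{noo}z_{o} + P_t\N_{no}u + P_tD_{no}d .
\end{equation*}
In DT, $\xi_{t+1} = P_{t+1}F_{no}z_{no,t} + P_{t+1}(\cdots)$, and Lemma~\ref{lem_Pd} gives $P_{t+1}F_{no} = \Lambda P_t$, hence $\xi^+ = \Lambda\xi + \Sigma_t(F_{noo}z_o + \N_{no}u + D_{no}d)$ with $\Sigma_t = P_{t+1}$. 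In both cases the form is the same, with $\Sigma_t$ as defined after~\eqref{eq:obszno}.

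\textbf{Inclusion.} Define the errors $\overline e := \overline{\hat z}_{no} - \xi$ and $\underline e := \xi - \underline{\hat z}_{no}$. At time $0$, we have $\xi_0 = P_0\N_{no}x_0$, so Lemma~\ref{lem_pm} applied with Assumption~\ref{ass_sys} gives $\overline e_0, \underline e_0 \geq 0$. The error dynamics are
\begin{equation*}
\overline e^+ = \Lambda\overline e + \delta_t, \qquad \underline e^+ = \Lambda \underline e + \underline\delta_t,
\end{equation*}
where $\delta_t$ collects the differences $(\Sigma_tF_{noo})^\oplus\overline z_o - (\Sigma_tF_{noo})^\ominus\underline z_o - \Sigma_tF_{noo}z_o$ and the analogous term for $d$. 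These are nonnegative by Lemma~\ref{lem_pm}. For the $z_o$ term this uses $\underline z_{o,t}\le z_{o,t}\le\overline z_{o,t}$, which is valid because~\eqref{eq:obszo} is an interval observer by hypothesis. Since $\Lambda$ is Metzler (CT) or non-negative (DT) and the inputs $\delta_t,\underline\delta_t$ are nonnegative, the standard positivity argument (variation of constants with $e^{\Lambda t}\ge 0$, or induction with $\Lambda\ge0$) yields $\overline e_t,\underline e_t\ge 0$. Applying Lemma~\ref{lem_pm} once more with $z_{no,t} = P_t^{-1}\xi_t$ gives $\underline z_{no,t}\le z_{no,t}\le\overline z_{no,t}$ for $t>0$. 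At $t=0$, the bounds follow directly from $z_{no,0} = \N_{no}x_0$.

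\textbf{Boundedness and convergence.} This is the main step requiring care. The width $\overline{\hat z}_{no}-\underline{\hat z}_{no} = \overline e + \underline e$ satisfies
\begin{equation*}
(\cdot)^+ = \Lambda(\cdot) + |\Sigma_tF_{noo}|(\overline z_o-\underline z_o) + |\Sigma_tD_{no}|(\overline d-\underline d),
\end{equation*}
where $|M| = M^\oplus + M^\ominus$. Since $\Lambda$ is Hurwitz/Schur, this is an input-to-state stable linear system. Its inputs are bounded because $\|\Sigma_t\|\le\sigma$ uniformly (Lemmas~\ref{lem_Pc}--\ref{lem_Pd}), $\overline d-\underline d$ is bounded, and $\overline z_o-\underline z_o$ is bounded and tends to zero when the uncertainties vanish, since~\eqref{eq:obszo} is an interval observer. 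The width in $z_{no}$ is then at most $|P_t^{-1}|$ times this quantity, and $\|P_t^{-1}\|\le\sigma$. It follows that the width is bounded, and that it converges to zero when $\overline d=\underline d$ and the $z_o$-interval width converges to zero, i.e.\ in the absence of $(d_t,w_t)$.

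The main obstacle is exactly this cascade argument. It requires that the convergence of the $z_o$-interval propagate through the ISS bound, which relies crucially on the uniform boundedness of $P_t$ and $P_t^{-1}$.
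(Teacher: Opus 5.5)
Your proof is correct and follows the same route as the paper. You treat $z_{o,t}$ as an additional bounded disturbance using the bounds from observer~\eqref{eq:obszo}, pass to the coordinates $P_t z_{no,t}$ given by Lemma~\ref{lem_Pc} or Lemma~\ref{lem_Pd}, and use Lemma~\ref{lem_pm} both for the initialization and for returning to the $z_{no}$-coordinates; beyond that, you spell out the positivity and cascade arguments (the latter relying on the uniform bounds on $P_t$ and $P_t^{-1}$) that the paper leaves to \citep{Mazenccontinu,mazenc2014interval}.
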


\begin{pf}
The observer structure~\eqref{eq:obszno} is similar to~\citep{Mazenccontinu,mazenc2014interval}, which differs between CT and DT by the computation of the transformation $P_t$. Since~\eqref{eq:obszo} is an interval observer for system~\eqref{eq:syszo}, we have $\underline{z}_{o,t} \leq z_{o,t} \leq \overline{z}_{o,t}$ for all $t \geq 0$ (in CT) or $t \in \NN$ (in DT), and so $z_{o,t}$ is seen as a disturbance alongside $d_t$, with its own known bounds. The initial conditions are obtained using Lemma~\ref{lem_pm} because we have $P_0 z_{no,0} = P_0 \N_{no}x_0$. We come back to the $z_{no}$-coordinates using Lemma~\ref{lem_pm} again. The initial bounds follow from $z_{no,0} = \N_{no}x_0$.\hfill $\blacksquare$
\end{pf}

Observer~\eqref{eq:obszno} is illustrated through the following example.
\begin{example}\label{eg3}
    Consider the system in Example~\ref{eg1}, for which we have designed a Sylvester-based interval observer for $z_{o,t}$ as in Example~\ref{eg2}. Our decomposition allows us to build for $z_{no,t}$ an interval observer~\eqref{eq:obszno}. Since $F_{no} = -0.5$ is Schur but is not non-negative, we can exploit the simple transformation $P_t = (-1)^t$. We obtain $\Lambda = P_t^+F_{no}(P_t)^{-1} = (-1)^{t+1}(-0.5)(-1)^t = 0.5$, which is non-negative and Schur. If we had had to compute and implement online this time-varying transformation for the whole $F \in \RR^{4\times 4}$, it would have been heavy. The initial and uncertainty bounds are in Example~\ref{eg2}. Observer~\eqref{eq:obszno} is then implemented, fed with the bounds from observer~\eqref{eq:obszo}, giving the results in Figure~\ref{fig:zno}. The state is initialized at $0$.
\end{example}
\begin{figure}[h]
\includegraphics[width=0.495\columnwidth,height=0.31\columnwidth]{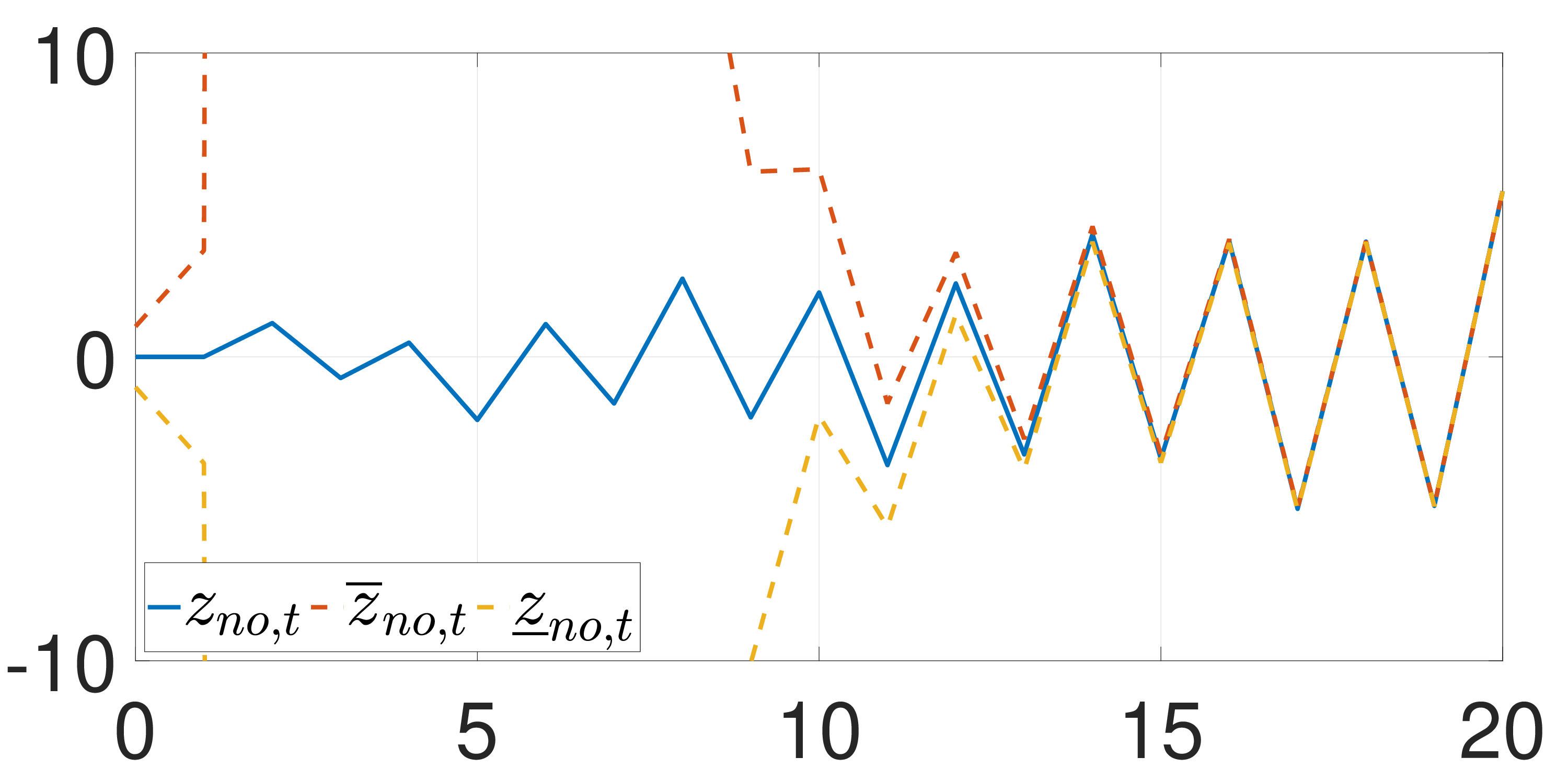}
\includegraphics[width=0.495\columnwidth,height=0.31\columnwidth]{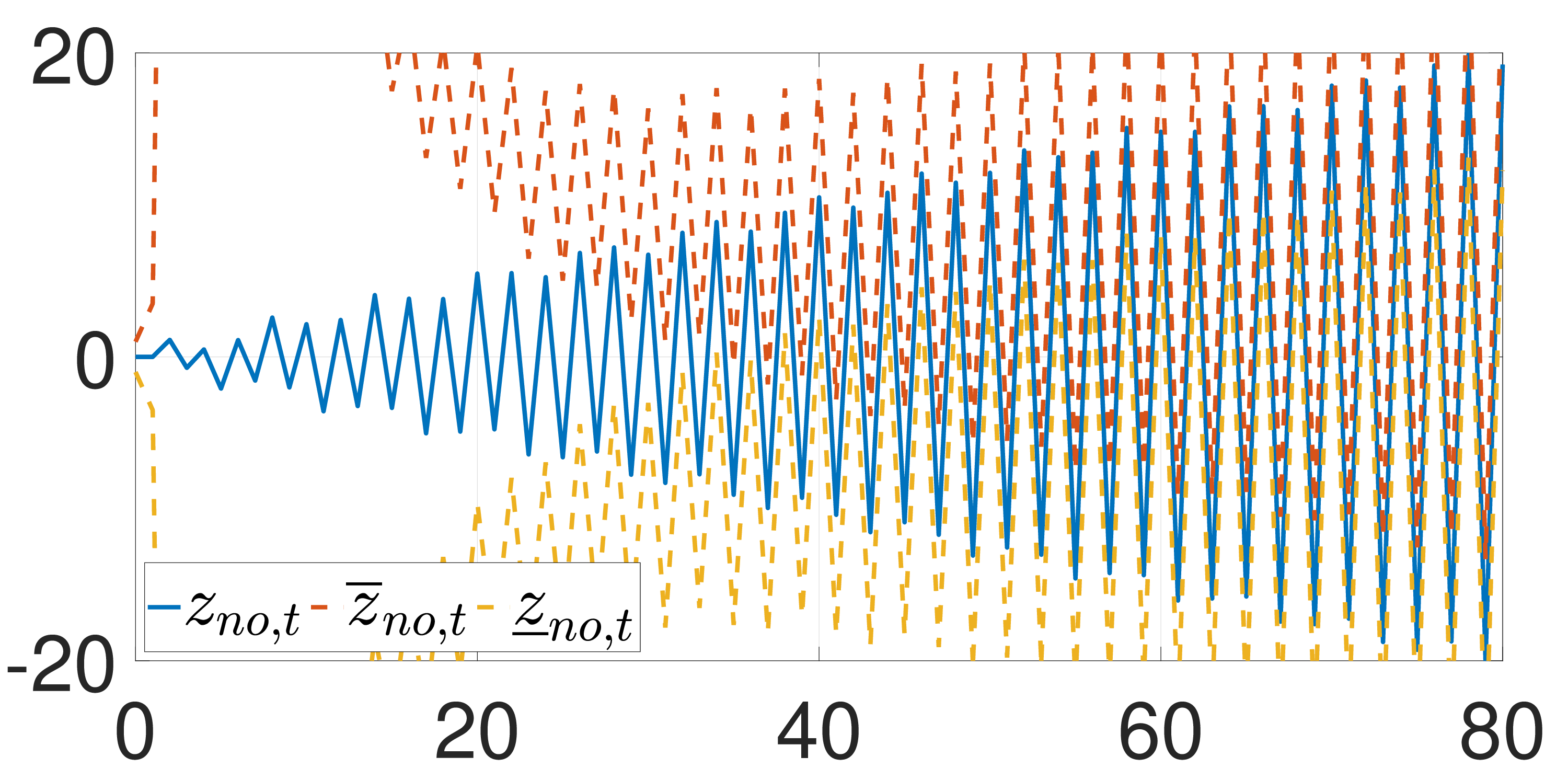}
        \caption{Interval estimation for $z_{no,t}$. Left: Convergence in the absence of $(d_t,w_t)$; right: Intervals in the presence of $(d_t,w_t)$.}
    \label{fig:zno}
\end{figure}

\subsection{Decomposition-Based Designs for $x_t$}
Now that we have built interval observers giving the bounds of $z_{o,t}$ and $z_{no,t}$, we use these to reconstruct the bounds of $x_t$. Naturally, the bounds in the $x$-coordinates are obtained after time $0$ as
\begin{subequations}\label{eq:boundx}
\begin{align}
\overline{x}_t & = \M_o^\oplus \overline{z}_{o,t} - \M_o^\ominus \underline{z}_{o,t} + \M_{no}^\oplus \overline{z}_{no,t} - \M_{no}^\ominus \underline{z}_{no,t}, \\
\underline{x}_t & = \M_o^\oplus \underline{z}_{o,t} - \M_o^\ominus \overline{z}_{o,t} + \M_{no}^\oplus \underline{z}_{no,t} - \M_{no}^\ominus \overline{z}_{no,t},
\end{align}
\end{subequations}
with $(\overline{z}_{o,t},\underline{z}_{o,t})$ from observer~\eqref{eq:obszo} and $(\overline{z}_{no,t},\underline{z}_{no,t})$ from observer~\eqref{eq:obszno}, and those at time $0$ are $(\overline{x}_0,\underline{x}_0)$ from Assumption~\ref{ass_sys}. These online bounds are straightforwardly obtained from the inverse~\eqref{eq:inv} using Lemma~\ref{lem_pm}.

Alternatively, thanks to linearity, we can invert the decomposition directly onto the observer dynamics, giving us a direct design in the $x$-coordinates taking the dynamics
\begin{mysubequations}\label{eq:obsx}
\begin{align}
\overline{\hat{x}}^+_t & =\begin{pmatrix}F_o-T^{-1}B_oH_o & 0 \\\Phi_t& \Lambda\end{pmatrix}\overline{\hat{x}}_t + \begin{pmatrix}0 & 0 \\\Omega_t& 0\end{pmatrix}\underline{\hat{x}}_t \notag\\&\qquad{}+\begin{pmatrix}T^{-1} (TD_o)^\oplus\\(\Sigma_t D_{no})^\oplus\end{pmatrix}\overline{d}_t -\begin{pmatrix}T^{-1} (TD_o)^\ominus\\(\Sigma_t D_{no})^\ominus\end{pmatrix}\underline{d}_t \notag\\&\qquad{}+\begin{pmatrix}T^{-1} (B_oW)^\ominus\\0\end{pmatrix}\overline{w}_t -\begin{pmatrix}T^{-1} (B_oW)^\oplus\\0\end{pmatrix}\underline{w}_t\notag\\&\qquad{}+\begin{pmatrix}\N_o \\ \Sigma_t \N_{no}\end{pmatrix} u_t +\begin{pmatrix}T^{-1}B_o\\0\end{pmatrix}y_t,\\
\underline{\hat{x}}^+_t& =\begin{pmatrix}F_o-T^{-1}B_oH_o & 0 \\\Phi_t& \Lambda\end{pmatrix}\underline{\hat{x}}_t + \begin{pmatrix}0 & 0 \\\Omega_t& 0\end{pmatrix}\overline{\hat{x}}_t \notag\\&\qquad{}+\begin{pmatrix}T^{-1} (TD_o)^\oplus\\(\Sigma_t D_{no})^\oplus\end{pmatrix}\underline{d}_t -\begin{pmatrix}T^{-1} (TD_o)^\ominus\\(\Sigma_t D_{no})^\ominus\end{pmatrix}\overline{d}_t \notag\\&\qquad{}+\begin{pmatrix}T^{-1}(B_oW)^\ominus\\0\end{pmatrix}\underline{w}_t -\begin{pmatrix}T^{-1} (B_oW)^\oplus\\0\end{pmatrix}\overline{w}_t\notag\\&\qquad{}+ \begin{pmatrix}\N_o \\ \Sigma_t \N_{no}\end{pmatrix} u_t +\begin{pmatrix}T^{-1}B_o\\0\end{pmatrix}y_t,
\end{align}
with $T$ solution to~\eqref{eq:sylvester} for appropriate $(A_o,B_o)$ as detailed above and $\Lambda$ obtained from $F_{no}$ using Lemma~\ref{lem_Pc} (in CT) or Lemma~\ref{lem_Pd} (in DT), where
\begin{align}
\Phi_t &= (\Sigma_tF_{noo})^\oplus (T^{-1})^\oplus T + (\Sigma_tF_{noo})^\ominus (T^{-1})^\ominus T,\\
\Omega_t &= -(\Sigma_tF_{noo})^\oplus (T^{-1})^\ominus T - (\Sigma_tF_{noo})^\ominus (T^{-1})^\oplus T,
\end{align}
with $\Sigma_t = P_t$ in CT and $\Sigma_t = P_t^+$ in DT, where $t \mapsto P_t$ is obtained from $F_{no}$ along with $\Lambda$ using Lemma~\ref{lem_Pc} (in CT) or Lemma~\ref{lem_Pd} (in DT), associated with the initial conditions
\begin{align}
\overline{\hat{x}}_0 & = \begin{pmatrix}T^{-1}(T\N_o)^\oplus\\(P_0 \N_{no})^\oplus\end{pmatrix}\overline{x}_0 - \begin{pmatrix}T^{-1}(T\N_o)^\ominus\\(P_0 \N_{no})^\ominus\end{pmatrix}\underline{x}_0,\\
\underline{\hat{x}}_0 & =\begin{pmatrix}T^{-1}(T\N_o)^\oplus\\(P_0 \N_{no})^\oplus\end{pmatrix}\underline{x}_0 - \begin{pmatrix}T^{-1}(T\N_o)^\ominus\\(P_0 \N_{no})^\ominus\end{pmatrix}\overline{x}_0,
\end{align}
and the bounds after time $0$
\begin{align}
\overline{x}_t & = \begin{pmatrix}\phi_l&\varphi_{l,t} \end{pmatrix}\overline{\hat{x}}_t - \begin{pmatrix}\phi_r&\varphi_{r,t} \end{pmatrix}\underline{\hat{x}}_t,\\
\underline{x}_t & =\begin{pmatrix}\phi_l&\varphi_{l,t} \end{pmatrix}\underline{\hat{x}}_t - \begin{pmatrix}\phi_r&\varphi_{r,t} \end{pmatrix}\overline{\hat{x}}_t,
\end{align}
where
\begin{align}
\phi_l & = \M_o^\oplus (T^{-1})^\oplus T + \M_o^\ominus(T^{-1})^\ominus T,\\
\varphi_{l,t}& = \M_{no}^\oplus (P_t^{-1})^\oplus + \M_{no}^\ominus(P_t^{-1})^\ominus,\\
\phi_r & = \M_o^\oplus (T^{-1})^\ominus T + \M_o^\ominus(T^{-1})^\oplus T,\\
\varphi_{r,t}& = \M_{no}^\oplus (P_t^{-1})^\ominus + \M_{no}^\ominus(P_t^{-1})^\oplus.
\end{align}
\end{mysubequations}
The bounds at time $0$ are from Assumption~\ref{ass_sys}.

Observer~\eqref{eq:obsx} is obtained by concatenating $(\overline{\hat{z}}_{o,t},\overline{\hat{z}}_{no,t})$ and $(\underline{\hat{z}}_{o,t},\underline{\hat{z}}_{no,t})$ and denoting them as the new $x$-coordinates states $\overline{\hat{x}}_t$ and $\underline{\hat{x}}_t$. Even though the forms look complicated, a considerable part of these are in fact \emph{constant} (namely those without the subscript $t$) and can be computed offline, reducing computational burden. The larger $n_o$, namely the larger the observable part of the state with respect to the full state, the larger the constant part in observer~\eqref{eq:obsx}. The results in the $x$-coordinates are stated in the next lemma.

\begin{lemma}
Suppose Assumption~\ref{ass_sys} holds and $T$ solution to~\eqref{eq:sylvester} is invertible. Then, observers~\eqref{eq:obszo}-\eqref{eq:obszno}-\eqref{eq:boundx} and~\eqref{eq:obsx} are interval observers for system~\eqref{eq:sys}.
\end{lemma}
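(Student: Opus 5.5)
The plan is to reduce both claims to the two previous lemmas, which already establish that \eqref{eq:obszo} is an interval observer for \eqref{eq:syszo} and \eqref{eq:obszno} is one for \eqref{eq:syszno}. An interval observer must deliver two things: the \emph{enclosure} $\underline x_t\le x_t\le\overline x_t$ for all times, and \emph{input-to-state stability} of the bounds, i.e., $\overline x_t-\underline x_t$ bounded by a decaying function of the initial width plus a gain on the uncertainty widths. This includes convergence to zero width when the uncertainties vanish. I would verify each property separately, first for the cascade \eqref{eq:obszo}-\eqref{eq:obszno}-\eqref{eq:boundx}, and then transfer it to \eqref{eq:obsx}.

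\emph{Enclosure.} For $t>0$, the previous two lemmas give $\underline z_{o,t}\le z_{o,t}\le\overline z_{o,t}$ and $\underline z_{no,t}\le z_{no,t}\le\overline z_{no,t}$. Then \eqref{eq:inv}, $x_t=\M_oz_{o,t}+\M_{no}z_{no,t}$, combined with Lemma~\ref{lem_pm} applied to each term, gives exactly \eqref{eq:boundx}. At $t=0$ the bounds are those of Assumption~\ref{ass_sys}.

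\emph{Stability.} By Lemma~\ref{lem_pm}-type algebra, $\overline x_t-\underline x_t=|\M_o|(\overline z_{o,t}-\underline z_{o,t})+|\M_{no}|(\overline z_{no,t}-\underline z_{no,t})$, where $|\cdot|=(\cdot)^\oplus+(\cdot)^\ominus$. The $z_o$-width is ISS with respect to the widths of $(d_t,w_t)$ and of the initial set. The $z_{no}$-width is ISS with respect to the widths of $(d_t,\overline z_{o,t}-\underline z_{o,t})$; here $\Lambda$ is Metzler Hurwitz or non-negative Schur, and $\Sigma_t$, $P_t^{-1}$ are uniformly bounded by $\sigma$ from Lemmas~\ref{lem_Pc}--\ref{lem_Pd}. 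A cascade of ISS systems is ISS, and $\M$ is constant, so the $x$-width is ISS as well. The only care needed is checking that the gains from $\overline z_o-\underline z_o$ into the $z_{no}$ error, namely $|\Sigma_tF_{noo}|$, are uniformly bounded, which follows from $\|P_t\|\le\sigma$.

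\emph{Observer~\eqref{eq:obsx}.} Set $\overline{\hat x}_t:=(\overline{\hat z}_{o,t},\overline{\hat z}_{no,t})$ and $\underline{\hat x}_t:=(\underline{\hat z}_{o,t},\underline{\hat z}_{no,t})$. I then show by direct substitution that \eqref{eq:obsx} reproduces \eqref{eq:obszo}--\eqref{eq:obszno}--\eqref{eq:boundx} exactly; uniqueness of solutions then gives identical trajectories, so the previous part applies. This substitution is the step where I expect the main bookkeeping obstacle. It requires plugging \eqref{eq:obszo_bdz} into the $F_{noo}$ terms of \eqref{eq:obszno}. Writing $S=\Sigma_tF_{noo}$ and $R=T^{-1}$, the upper update contains
\[
S^\oplus(R^\oplus T\overline{\hat z}_o-R^\ominus T\underline{\hat z}_o)-S^\ominus(R^\oplus T\underline{\hat z}_o-R^\ominus T\overline{\hat z}_o),
\]
and collecting the coefficients of $\overline{\hat z}_o$ and $\underline{\hat z}_o$ yields $\Phi_t$ and $\Omega_t$. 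The $z_o$ block gives $F_o-T^{-1}B_oH_o$, with the $y_t$ term separated out. Similarly, substituting the $z_o$ and $z_{no}$ bounds into \eqref{eq:boundx} gives $\phi_l,\phi_r,\varphi_{l,t},\varphi_{r,t}$. The initial conditions match by stacking. One sign check is needed: the lower equation must be the mirror of the upper one with $\overline{\cdot}\leftrightarrow\underline{\cdot}$ swapped, which holds by the symmetry of Lemma~\ref{lem_pm}.

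Once these identities are verified, both enclosure and ISS carry over verbatim, which completes the proof.
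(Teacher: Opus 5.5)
Your proposal is correct and follows the paper's route: the paper gives no separate proof, only the remarks that \eqref{eq:boundx} follows from \eqref{eq:inv} via Lemma~\ref{lem_pm} and that \eqref{eq:obsx} is the stacking of $(\overline{\hat z}_{o,t},\overline{\hat z}_{no,t})$ and $(\underline{\hat z}_{o,t},\underline{\hat z}_{no,t})$. Your coefficient matching for $\Phi_t,\Omega_t,\phi_l,\phi_r,\varphi_{l,t},\varphi_{r,t}$ checks out, and your width/cascade argument using $\|P_t\|+\|P_t^{-1}\|\le\sigma$ spells out the stability part that the paper leaves implicit.
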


The step-by-step procedure to design our decomposition-based interval observers is presented in Algorithm~\ref{alg:cap}.

\begin{algorithm}
\caption{Decomposition-based interval observer design for system~\eqref{eq:sys}}\label{alg:cap}
\begin{algorithmic}
\Require System~\eqref{eq:sys} under Assumption~\ref{ass_sys}
\State \textbf{|Offline steps|}
\State \textbf{Step 1:} Analyze observability with $\O$
\State \textbf{Step 2:} Find matrices $\M_o$, $\M_{no}$, $\N_o$, $\N_{no}$
\State \textbf{Step 3:} Compute matrices of system~\eqref{eq:sysz} 
\State \textbf{Step 4:} Pick $(A_o,B_o)$ and solve~\eqref{eq:sylvester} for $T$ \Comment{In Matlab, use {\fontfamily{qcr}\selectfont T = sylvester(Ao,-Fo,-Bo*Ho)}} 
\State \textbf{Step 5:} Find $\Lambda$ and $t \mapsto P_t$ from $F_{no}$ using~\citep[Theorem $2$]{Mazenccontinu} (in CT) or~\citep[Theorem $4$]{mazenc2014interval} (in DT) 
\State \textbf{|Online steps|} \State \textbf{Step 6:} Implement observer~\eqref{eq:obszo}-\eqref{eq:obszno}-\eqref{eq:boundx} or~\eqref{eq:obsx}
\end{algorithmic}
\end{algorithm}

\begin{example}\label{eg4}
    Consider the system in Example~\ref{eg1}, for which we have designed Sylvester-based and Jordan-based interval observers for $z_{o,t}$ and $z_{no,t}$ as in Examples~\ref{eg2} and~\ref{eg3}, respectively. Observer~\eqref{eq:obszo}-\eqref{eq:obszno}-\eqref{eq:boundx} is then implemented by simply running~\eqref{eq:boundx} on the $z_o$- and $z_{no}$-intervals, giving the results in Figure~\ref{fig:x}. We could alternatively implement observer~\eqref{eq:obsx}, which would give similar results. Note again that finding and implementing online a $4 \times 4$ time-varying transformation for the system in Example~\ref{eg1} using~\citep{mazenc2014interval} is rather computationally heavy, illustrating the interest of our decomposition-based approach. 
\end{example}
\begin{figure}[h]    \includegraphics[width=\columnwidth,height=0.5\columnwidth]{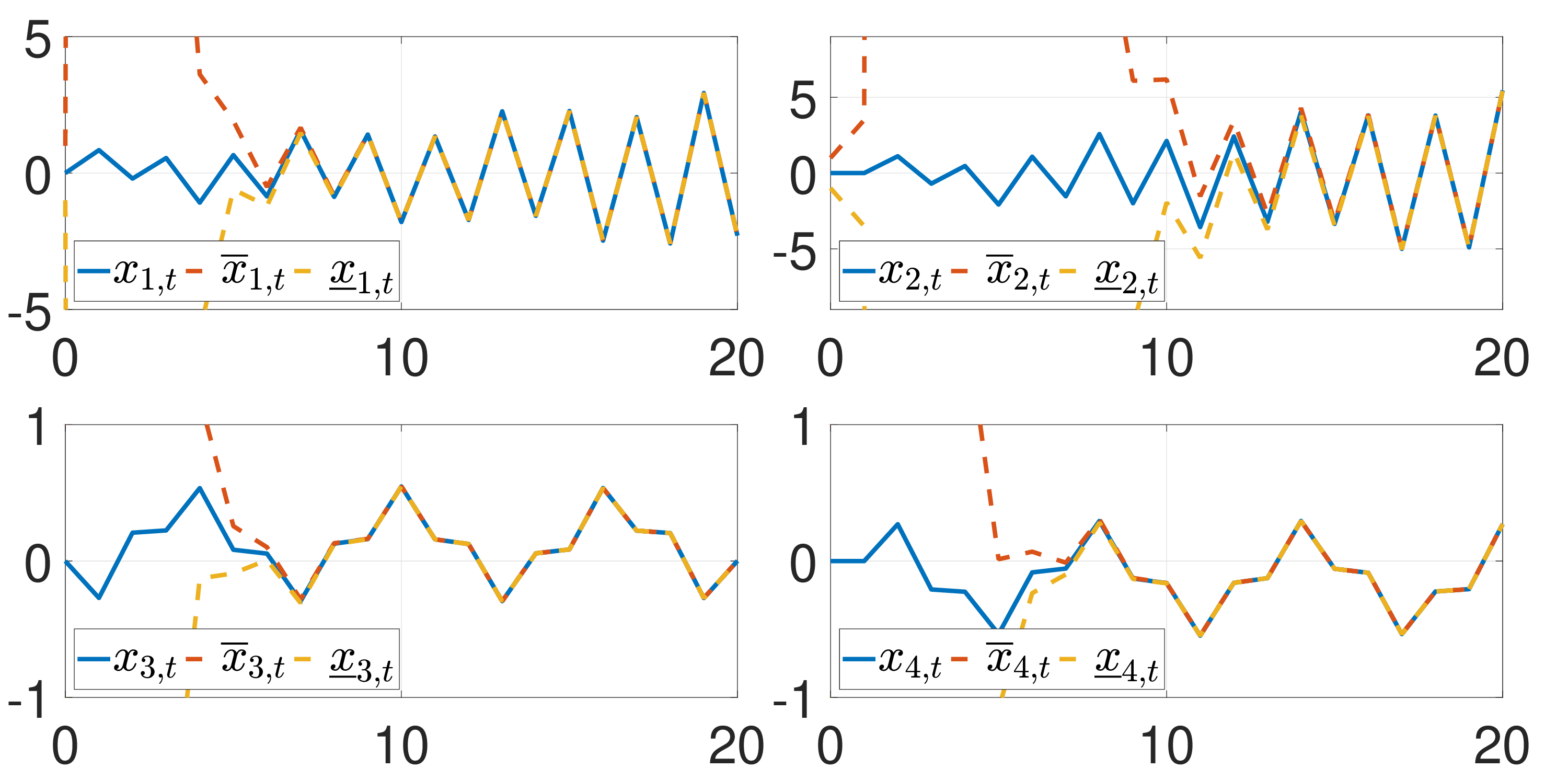}
\includegraphics[width=\columnwidth,height=0.5\columnwidth]{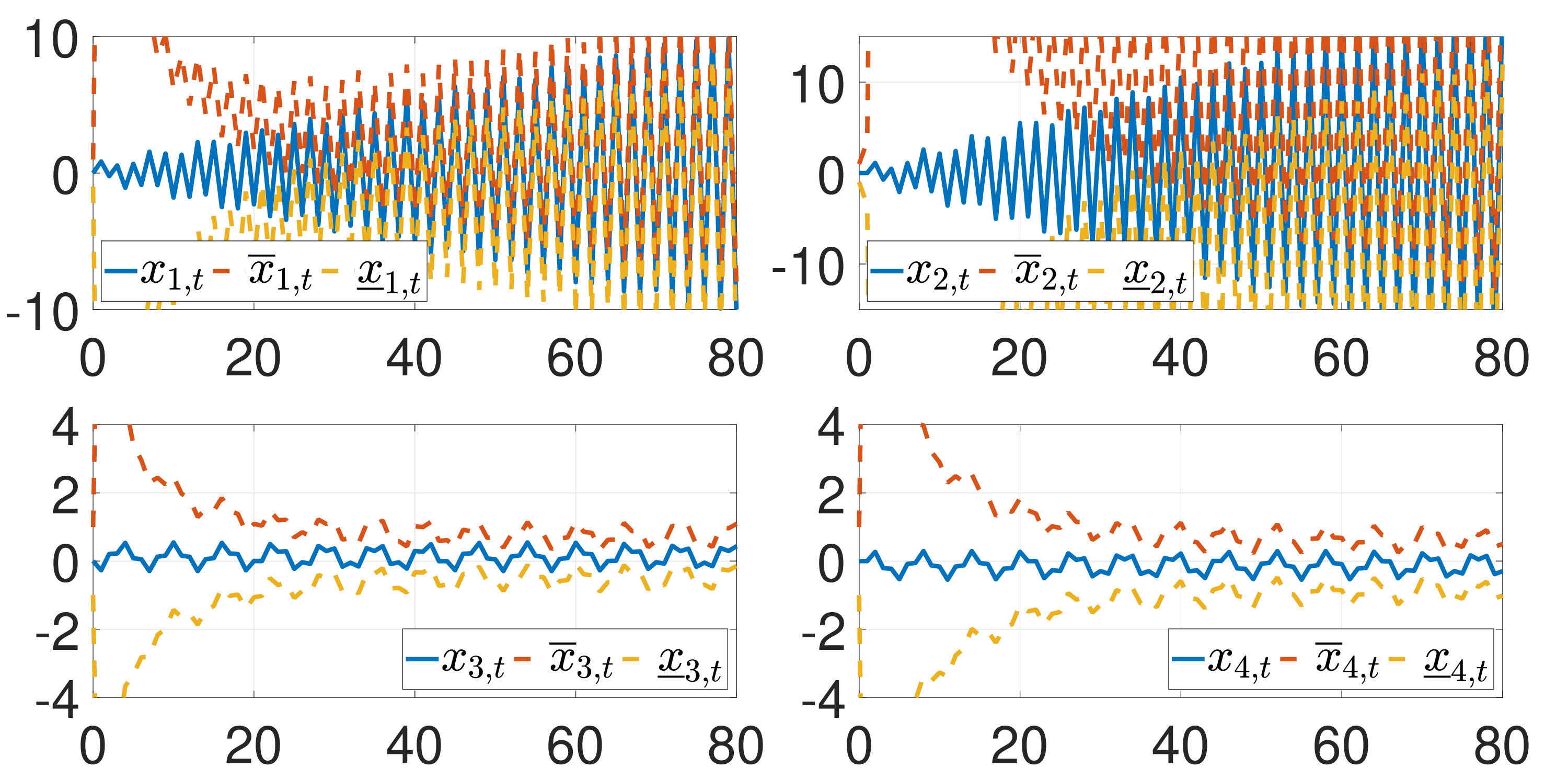}
        \caption{Interval estimation for $x_t$ using observer~\eqref{eq:obszo}-\eqref{eq:obszno}-\eqref{eq:boundx}. Top: Convergence in the absence of $(d_t,w_t)$; bottom: Intervals in the presence of $(d_t,w_t)$.}
    \label{fig:x}
\end{figure}

\section{CONCLUSION}
We propose a systematic interval observer for detectable LTI systems, where a decomposition separates the state into two parts---one observable from the output and one detectable---for which suitable interval observers are designed and concatenated. We recover the bounds either by inverting the decomposition on the new bounds or by directly writing the combined observer in the original coordinates, resulting in reduced computational complexity.

Note that since the results in~\citep{baoThachAut} extend to LTV systems, the approach naturally applies to time-varying $(F_o,H_o)$ under stronger observability conditions and possibly higher-dimensional $z_o$. This work thus covers the class of LTV$\slash$LPV systems admitting a decomposition of the form~\eqref{eq:sysz} resulting in a constant $F_{no}$ as required by Lemmas~\ref{lem_Pc} and~\ref{lem_Pd}, while the other matrices can be varying.

Future work includes optimizing $T$ and parameters $(A_o,B_o)$ satisfying~\eqref{eq:sylvester} to tighten interval observer bounds, both in this paper and~\citep{baoThachAut}, using norm-based optimization methods, e.g.,~\citep{li19}.

\begin{ack} 
The work of G. Q. B. Tran was supported in part by the AFOSR MURI FA9550-23-1-0337 grant. The work of T. N. Dinh was supported in part by the Hanse-Wissenschaftskolleg---Institute for Advanced Study (HWK).\end{ack}

\bibliography{ref} 
\end{document}